\newtheorem{theorem}{Theorem}
\newtheorem{lemma}{Lemma}
\theoremstyle{definition}
\newtheorem{definition}{Definition}
\def\footnoterule{\relax%
  \kern-1pt
  \hbox to \columnwidth{\vrule width 0.5\columnwidth height 0.4pt\hfill}
  \kern4.6pt}
\DeclareMathOperator{\tr}{tr}
\begin{document}
\title{Linear to multi-linear algebra and systems using tensors}
\author{Divyanshu~Pandey, Adithya Venugopal,
        Harry~Leib
\thanks{This work was supported by the Natural Sciences and Engineering Research Council of Canada (NSERC) titled ‘‘Tensor modulation for
space-time-frequency communication systems’’ under Grant RGPIN-2016-03647.

Author's email : divyanshu.pandey@mail.mcgill.ca, adithya.venugopal@mail.mcgill.ca, harry.leib@mcgill.ca}} 
\maketitle

\vspace*{-1.5cm}
\begin{abstract}
In the past few decades, multi-linear algebra also known as tensor algebra has been adapted and employed as a tool for various engineering applications. Recent developments in tensor algebra have indicated that several well-known concepts from Linear Algebra can be extended to a multi-linear setting with the help of a special form of tensor contracted product, known as the Einstein product. Thus, the tensor contracted product and its properties can be harnessed to define the notions of multi-linear system theory where the input, output signals, and the system are inherently multi-domain or multi-modal.  This paper provides an overview of tensor algebra tools which can be seen as an extension of Linear Algebra, at the same time highlighting the differences and advantages that the multi-linear setting brings forth. In particular, the notions of tensor inversion, tensor singular value and tensor eigenvalue decomposition using the Einstein product are explained. In addition, this paper also introduces the notion of contracted convolution in both discrete and continuous multi-linear system tensors. Tensor Network representation of various tensor operations is also presented. Also, application of tensor tools in developing transceiver schemes for multi-domain communication systems, with an example of MIMO CDMA system, is presented.  This paper provides a foundation for professionals whose research involves multi-domain or multi-modal signals and systems.
    
\end{abstract}

\begin{IEEEkeywords}
Tensors, Contracted Product, Einstein Product, Contracted Convolution, multi-linear systems.
\end{IEEEkeywords}

\begin{spacing}{1.63}
\section{Introduction}
Tensors are multi-way arrays that are indexed by multiple indices and the number of indices is called the \textit{order} of the tensor \cite{KoldaTensor}. Subsequently, matrices and vectors can be seen as order two and order one tensors respectively. Higher-order tensors are inherently capable of mathematically representing processes and systems with dependency on more than two variables. Hence tensors are widely employed for several applications in many engineering and science disciplines. Tensors were initially introduced for applications in Physics during the early nineteenth century \cite{PierreComon}. Later with the work of Tucker \cite{tucker64extension}, tensors were used in Psychometrics in the 1960s for extending two-way data analysis to higher-order datasets, and further in Chemometrics in the 1980s \cite{chemo1981, bro2006review}. The last few decades have witnessed a surge in their applications in areas such as  data mining \cite{DataMining,TensorsDataMining}, computer vision \cite{ComputerVision,ComputerVision2}, neuroscience \cite{Neuroscience}, machine learning \cite{NikosTensor}, signal processing \cite{Cichocki,NikosGianPFA,MMSEJournal}, multi-domain communications \cite{AdithyaPaper, MDPIpaper} and controls system theory \cite{MLTI1,MLTI2}. When appropriately employed, tensors can help in developing models that capture interactions between various parameters of multi-domain systems. Such tensor-based system representation can enhance the understanding of the mutual effects of various system domains. 

Given the wide scope of applications that tensors support, there have been many recent publications summarizing the essential topics in tensor algebra. One such primary reference is \cite{KoldaTensor} where the fundamental tensor decompositions such as Tucker, PARAFAC and their variants are discussed in great detail with applications. Another useful reference is \cite{PierreComon} which presents tensors as a mapping from one linear space to another, along with a discussion on tensor ranks. A more signal processing oriented outlook on tensors is considered in \cite{Cichocki}, including applications such as Big Data storage and Compressed sensing. A more recent and exhaustive tutorial style paper is \cite{NikosTensor} which presents a detailed overview of up-to-date tensor decomposition algorithms, computations and applications in machine learning. Similarly, \cite{TensorTut2021} presents such an overview with applications in multiple-input multiple-output (MIMO) wireless communications. Also, \cite{TensorBook2020} provides a detailed review of many tensor decompositions with a focus on the needs of Data Analytics community.
However, all these papers do not consider in particular the notions of tensor contracted product and contracted convolution, which are the crux of this paper. With the help of a specific form of contracted product, known as the Einstein product of tensors, various tensor decompositions and properties can be established which may be viewed as an intuitive and meaningful extension of the corresponding linear algebra concepts.

The most popular and widely used decompositions in the case of matrices are the singular value and eigenvalue decompositions. In order to consider their extensions to higher-order tensors, it is important to note there is no single generalization that preserves all the properties of the matrix case \cite{limsingular,BelzenSVD}. The most commonly used generalization of the matrix singular value decomposition is known as a Higher-order Singular Value Decomposition (HOSVD)  which is basically the same as Tucker decomposition for higher-order tensors \cite{LathauwerSVD}. Similarly, several definitions exist in the literature for tensor eigenvalues as a generalization of the matrix eigenvalues \cite{QiLEigen}. More recently, in order to solve a set of multi-linear equations using tensor inversion,  a specific notion of tensor singular value decomposition and eigenvalue decomposition was introduced in \cite{TamonTensorInversion}, which generalizes the matrix SVD and EVD to tensors through a fixed transformation of the tensors into matrices. The authors in \cite{TamonTensorInversion} establish the equivalence between the Einstein product of tensors and the matrix product of the transformed tensors, thereby proving that a tensor group endowed with the Einstein product is structurally similar or isomorphic to a general linear group of matrices. The notion of equivalence between the Einstein product of tensors and the corresponding matrix product of the transformed tensors is very crucial and relevant as it helps in developing many tools and concepts from matrix theory such as matrix inverse, ranks, and determinants for tensors. Hence as a follow-up to \cite{TamonTensorInversion}, several other works explored different notions of linear algebra which can be extended to multi-linear algebra using the Einstein product \cite{LuEVD,SunMoore,TensorDet, huang2021numerical2, huang2021numerical, huang2018iterative, wang2018iterative}.

The purpose of this paper is twofold. First, we intend to present an overview of tensor algebra concepts developed in the past decade using the Einstein product. Since there is a natural way of extending the linear algebra concepts to tensors, the idea here is to present a summary of the most commonly used and relevant concepts which can equip the reader with tools to define and prove other properties more specific to their intended applications. Secondly, this paper introduces the notion of contracted convolutions for both discrete and continuous system tensors. The theory of linear time invariant (LTI) systems has been an indispensable tool in various engineering applications such as Communication Systems, and Controls. Now with the evolution of these subjects to multi-domain communication systems and multi-linear systems theory, there is a need to better understand the classical topics in a multi-domain setting. This paper intends to provide such tools through a tutorial style presentation of the subject matter leading to a mechanism to develop more tools needed for research and applications in any multi-domain/multi-dimensional/multi-modal/multi-linear setting.   

The organization of this paper is as follows: in section \ref{Sec2}, we present basic tensor definitions and operations, including the concept of signal tensors and contracted convolutions. In section \ref{Sec3}, we present the Tensor Networks representation of various tensor operations. Section \ref{Sec4} presents some tensor decompositions based on the Einstein product. Section \ref{Sec5}  defines the notions of multi-linear system tensors and discusses their stability in both time and frequency domains. It also includes a detailed discussion on the application of tensors to multi-linear system representation with an example of MIMO CDMA system. The paper is concluded in section \ref{Sec6}.

\section{Fundamentals of tensors and notation}\label{Sec2}
A tensor is a multi-way array whose elements are indexed by three or more indices. Each index may correspond to a different domain, dimension, or mode of the quantity being represented by the array. The order of the tensor is the number of such indices or domains or dimensions or modes. A vector is often referred to as a tensor of order-1, a matrix as a tensor of order-2 and tensors of order greater than 2 are known as higher-order tensors. 

\subsection{Notations}
In this paper we use lowercase underline fonts to represent vectors, e.g. $\underline{\text{x}}$, uppercase fonts to represent matrices, e.g. $\text{X}$ and uppercase calligraphic fonts to represent tensors, e.g. $\mathscr{X}$. The individual elements of a tensor are denoted by the indices in subscript, e.g. the $(i_1,i_2,i_3)$th element of a third-order tensor $\mathscr{X}$ is denoted by ${\mathscr{X}}_{i_1,i_2,i_3} $. A colon in subscript for a mode corresponds to every element of that mode corresponding to fixed other modes. For instance, ${\mathscr{X}}_{:,i_2,i_3}$ denotes every element of tensor $\mathscr{X}$ corresponding to $i_2$th second and $i_3$th third mode.  The $n$th element in a sequence is denoted by a superscript in parentheses, e.g. ${\mathscr{A}}^{(n)}$ denotes the $n^{th}$ tensor in a sequence of tensors. We use $\mathbb{C}$ and $\mathbb{C}_k$ to denote a set of complex numbers and a set of complex numbers which are a function of $k$, respectively.

\subsection{Definitions and tensor operations}

\begin{definition}{\textbf{Tensor Linear Space }}:
The set of all tensors of size $I_1 \times \dots \times I_K$ over $\mathbb{C}$ forms a linear space, denoted as $\mathbb{T}_{I_1,\dots,I_K}(\mathbb{C})$. For $\mathscr{A},\mathscr{B} \in \mathbb{T}_{I_1,\dots,I_K}(\mathbb{C})$ and $\alpha \in \mathbb{C}$, the sum $\mathscr{A}+\mathscr{B} = \mathscr{C} \in \mathbb{T}_{I_1,\dots,I_K}(\mathbb{C})$ where $\mathscr{C}_{i_1,\dots,i_k}=\mathscr{A}_{i_1,\dots,i_k}+\mathscr{B}_{i_1,\dots,i_k}$, and scalar multiplication $\alpha \cdot \mathscr{A}= \mathscr{D} \in \mathbb{T}_{I_1,\dots,I_K}(\mathbb{C})$ where $\mathscr{D}_{i_1,\dots,i_k}=\alpha \mathscr{A}_{i_1,\dots,i_k}$.
\end{definition}

\begin{definition}{{\bf Fiber :}}
Fiber is defined by fixing every index in a tensor but one. A matrix column is a mode-1 fiber and matrix row is a mode-2 fiber. A third order tensor has similarly column (mode-1), row (mode-2) and tube (mode-3) fibers \cite{KoldaTensor}.
\end{definition}

\begin{definition}{{\bf Slices :}}
Slices are two dimensional sections of a tensors defined by fixing all but two indices. 
\end{definition}

\begin{definition}{{\bf Norm :}}
The $p-$norm of an order $N$ tensor $\mathscr{X} \in \mathbb{C}^{I_{1} \times I_{2} \times \dots \times I_{N} }$ is defined as 
\begin{equation}
\Vert \mathscr{X} \Vert_p = \Big(\sum_{i_{1}=1}^{I_{1}} \sum_{i_{2}=1}^{I_{2}} ...... \sum_{i_{N}=1}^{I_{N}} \mid \mathscr{X}_{i_{1},i_{2},....,i_{N}}\mid^{p}\Big)^{1/p}  
\end{equation}

Subsequently, the $2-$norm or the Frobenius norm of $\mathscr{X}$ is defined as the square root of the sum of the square of absolute values of all its elements : \\
\begin{equation} \label{Norm}
\Vert \mathscr{X} \Vert_2 = \sqrt{\sum_{i_{1}=1}^{I_{1}} \sum_{i_{2}=1}^{I_{2}} ...... \sum_{i_{N}=1}^{I_{N}} \mid \mathscr{X}_{i_{1},i_{2},....,i_{N}}\mid^{2}}  
\end{equation}
Also, the $1-$norm and $\infty-$norm of a tensor are defined as :
\begin{align}
\Vert \mathscr{X} \Vert_1 = \sum_{i_1,\dots,i_N} \mid \mathscr{X}_{i_{1},i_{2},....,i_{N}}\mid
\\
\Vert \mathscr{X} \Vert_{\infty} = \max_{i_1,\dots,i_N} \mid \mathscr{X}_{i_{1},i_{2},....,i_{N}}\mid 
\end{align}
\end{definition} 

\begin{definition}{{\bf Kronecker Product of Matrices :}} 
The Kronecker product of two matrices $\text{A}$ of size $I \times J$ and $\text{B}$ of size $K \times L$, denoted by $\text{A} \otimes \text{B}$, is a matrix of size $(IK) \times (JL)$ and is defined as
\begin{equation}
\text{A} \otimes \text{B} = \begin{bmatrix}
    \text{A}_{1,1}\text{B}&\text{A}_{1,2}\text{B}&\dots&\text{A}_{1,J}\text{B}\\
    \text{A}_{2,1}\text{B}&\text{A}_{2,2}\text{B}&\dots&\text{A}_{2,J}\text{B}\\
    \vdots&\vdots&\ddots&\vdots\\
    \text{A}_{I,1}\text{B}&\text{A}_{I,2}\text{B}&\dots&\text{A}_{I,J}\text{B}\\
  \end{bmatrix} 
\end{equation}
\end{definition}

\begin{definition}{\textbf{Matricization Transformation }}:
Let us denote the linear space of $P \times Q$ matrices over $\mathbb{C}$ as $\mathbb{M}_{P,Q}(\mathbb{C})$. For an order $K=N+M$ tensor $\mathscr{A} \in \mathbb{C}^{I_1\times \dots \times I_N \times J_1 \times \dots \times J_M}$, the transformation $f_{I_1,\dots,I_N|J_1,\dots,J_M} : \mathbb{T}_{I_1,\dots,I_N,J_1,\dots,J_M}(\mathbb{C}) \Rightarrow \mathbb{M}_{I_1 \cdot I_2\cdots I_{N-1} \cdot I_N,J_1 \cdot J_2\cdots J_{M-1}\cdot J_M} (\mathbb{C})$ with $f_{I_1,\dots,I_N|J_1,\dots,J_M}(\mathscr{A})=\text{A}$ is defined component-wise as \cite{TamonTensorInversion} :
\begin{equation} \label{Transform}
\mathscr{A}_{i_1,i_2,\dots,i_N,j_1,j_2,\dots, j_M} \xrightarrow{f_{I_1,\dots,I_N|J_1,\dots,J_M}}\text{A}_{i_1+\sum_{k=2}^{N}(i_k-1)\prod_{l=1}^{k-1}I_l , j_1+\sum_{k=2}^{M}(j_k-1)\prod_{l=1}^{k-1}J_l }
\end{equation}
\end{definition}
This transformation is referred to as matricization, or matrix unfolding by partitioning the indices into two disjoint subsets \cite{BaderTensor}. The vectorization operation as defined in \cite{DeLvectorize} can be seen as a specific case of (\ref{Transform}) by using $J_1=\dots=J_M=1$.  The bar in subscript of $f_{I_1,\dots,I_N|J_1,\dots,J_M}$ represents the partitioning after $N$ modes of an $N+M$ order tensor where first $N$ modes correspond to the rows of the representing matrix, and the last $M$ modes correspond to the columns of the representing matrix. This mapping is bijective \cite{TensorDet}, and it preserves addition and scalar multiplication operations i.e., for $\mathscr{A},\mathscr{B} \in \mathbb{T}_{I_1,\dots,I_N,J_1,\dots,J_M}(\mathbb{C})$ and any scalar $\alpha \in \mathbb{C}$, we have $f_{I_1,\dots,I_N|J_1,\dots,J_M}(\mathscr{A}+\mathscr{B})=f_{I_1,\dots,I_N|J_1,\dots,J_M}(\mathscr{A})+f_{I_1,\dots,I_N|J_1,\dots,J_M}(\mathscr{B})$ and $f_{I_1,\dots,I_N|J_1,\dots,J_M}(\alpha\mathscr{A})=\alpha f_{I_1,\dots,I_N|J_1,\dots,J_M}(\mathscr{A})$. Hence the linear spaces $ \mathbb{T}_{I_1,\dots,I_N,J_1,\dots,J_M}(\mathbb{C}) $ and $ \mathbb{M}_{I_1 \cdot I_2\cdots I_{N-1} \cdot I_N,J_1 \cdot J_2\cdots J_{M-1}\cdot J_M} (\mathbb{C})$ are isomorphic and the transformation $f_{I_1,\dots,I_N|J_1,\dots,J_M}$ is an isomorphism between the linear spaces. For a matrix, the transformation (\ref{Transform}) does no change when $N=M=1$, creates a column vector when $N=2,M=0$ and a row vector when $N=0,M=2$.

\subsubsection{Tensor Products}
Tensors have multiple modes, hence a product between two tensors can be defined in various ways. In this section, we present definitions of the most commonly used tensor products. 
\begin{definition}{{\bf Tensor Contracted product }}\cite{BaderTensor}:
Consider two tensors $\mathscr{X} \in \mathbb{C}^{I_{1} \times I_{2} \times \dots \times I_{M} \times J_{1} \times J_{2} \times \dots \times J_{N} }$ and $\mathscr{Y} \in \mathbb{C}^{I_{1} \times I_{2} \times \dots \times I_{M} \times K_{1} \times K_{2} \times \dots \times K_{P} }$. We can multiply both tensors along their common $M$ modes and the resulting tensor, $\mathscr{Z} \in \mathbb{C}^{J_{1} \times J_{2} \times \dots \times J_{N} \times K_{1} \times K_{2} \times \dots \times K_{P} } $ is given by :
\begin{equation}\label{ContractedProduct}
\mathscr{Z} = \{ \mathscr{X},\mathscr{Y} \}_{\{1,\dots,M;1,\dots,M\}}
\end{equation}
where
\begin{equation}
\mathscr{Z}_{j_1,\dots,j_N,k_1,\dots,k_P} = \sum_{i_1 = 1}^{I_1}\dots \sum_{i_M = 1}^{I_M} \mathscr{X}_{i_{1}, \dots, i_{M}, j_{1} , \dots, j_{N}}\mathscr{Y}_{i_{1} , \dots, i_{M} , k_{1} , \dots ,k_{P}} 
\end{equation}

\end{definition}
It is important to note that the modes to be contracted need not be consecutive, however the size of the corresponding dimensions must be equal. For example, tensors $\mathscr{A} \in \mathbb{C}^{K \times L \times M \times N}$ and $\mathscr{B} \in \mathbb{C}^{K \times M \times Q \times R}$ can be contracted along the first and third mode of $\mathscr{A}$ and first and second mode of $\mathscr{B}$ as $\mathscr{C}= \{ \mathscr{A},\mathscr{B} \}_{\{1,3;1,2\}}$ where $\mathscr{C} \in \mathbb{C}^{L \times N \times Q \times R}$. Matrix multiplication between $\text{A} \in \mathbb{C}^{I \times J}$ and $\text{B} \in \mathbb{C}^{J \times K}$ can be seen as a specific case of the contracted product as $\text{A}\cdot \text{B} = \{ \text{A},\text{B} \}_{\{2;1\}}$ where $\cdot$ represents usual matrix multiplication. Several other tensor products can be defined as specific cases of contracted product. One such commonly used tensor product is the Einstein product where the modes to be contracted are at a fixed location as defined next.

\begin{definition}{{\bf Einstein product }}:
The Einstein product between tensors $\mathscr{A} \in \mathbb{C}^{I_1 \times \dots \times I_P \times K_1 \dots \times K_N}$ and $\mathscr{B} \in \mathbb{C}^{K_1 \times \dots \times K_N \times J_{1} \dots \times J_M}$ is defined as a contraction between their $N$ common modes, denoted by $*_N$, as \cite{TamonTensorInversion} :
\begin{equation} \label{EinsteinProduct}
(\mathscr{A}*_N \mathscr{B})_{i_1,\dots,i_P,j_{1},\dots,j_{M}} = \sum_{k_1,\dots,k_N}\mathscr{A}_{i_1,i_2,\dots,i_P,k_1,\dots,k_N}\mathscr{B}_{k_1,\dots k_N,j_{1},j_{2},\dots,j_M}
\end{equation}
\end{definition}
In Einstein product, contraction is over $N$ consecutive modes and can also be written using the more general notation with contracted modes in subscript (say for sixth order tensors $\mathscr{H}, \mathscr{S} \in \mathbb{C}^{I \times J \times K \times I \times J \times K}$)  :
\begin{equation}
(\mathscr{H}*_3 \mathscr{S})_{i,j,k,\hat{i},\hat{j},\hat{k}}=\sum_{u=1}^{I}\sum_{v=1}^{J}\sum_{w=1}^{K}\mathscr{H}_{i,j,k,u,v,w}\mathscr{S}_{u,v,w,\hat{i},\hat{j},\hat{k}} = (\{ \mathscr{H}, \mathscr{S} \}_{\{4,5,6;1,2,3\}})_{i,j,k,\hat{i},\hat{j},\hat{k}}
\end{equation}
Note that one can define Einstein product for several specific mode orderings. For instance, in \cite{MLTI1} Einstein product is defined as contraction over $N$ alternate modes and not consecutive modes. However, that would not change the concepts presented here on so far as we remain consistent with the definition.  

\begin{definition}{{\bf Inner Product :}}
The inner product of two tensors $\mathscr{X} \in \mathbb{C}^{I_{1} \times I_{2} \times \dots \times I_{N}}$ and $\mathscr{Y}\in \mathbb{C}^{I_{1} \times I_{2} \times \dots \times I_{N}}$ of the same order $N$ with all the dimensions of same length is given by : \\
\begin{equation} \label{InnerProduct}
\langle \mathscr{X}, \mathscr{Y} \rangle = \sum_{i_{1}=1}^{I_{1}} \sum_{i_{2}=1}^{I_{2}} \dots \sum_{i_{N}=1}^{I_{N}} \mathscr{X}_{i_{1},i_{2},\dots,i_{N}} \mathscr{Y}_{i_{1},i_{2},\dots,i_{N}} 
\end{equation}
It can also be seen as the Einstein product of tensors where contraction is along all the dimensions, i.e.  $\langle \mathscr{X}, \mathscr{Y} \rangle = \mathscr{X} *_N \mathscr{Y} = \mathscr{Y} *_N \mathscr{X}$. 
\end{definition} 

\begin{definition}{{\bf Outer Product :}}
Consider two tensors $\mathscr{X} \in \mathbb{C}^{I_{1} \times I_{2} \times \dots \times I_{N} }$ and $\mathscr{Y} \in \mathbb{C}^{J_{1} \times J_{2} \times \dots \times J_{M} }$ of order $N$ and $M$ respectively. The outer product between $\mathscr{X}$ and $\mathscr{Y}$ denoted by $\mathscr{X} \circ \mathscr{Y} $ is given by a tensor of size $ {I_{1} \times I_{2} \times \dots \times I_{N} \times J_{1} \times J_{2} \times \dots \times J_{M}} $ with individual elements as : \\
\begin{equation} \label{OuterProduct}
( \mathscr{X} \circ \mathscr{Y} )_{i_{1},i_{2},\dots,i_{N},j_{1},j_{2},\dots,j_{M}} = \mathscr{X}_{i_{1},i_{2},\dots,i_{N}} \mathscr{Y}_{j_{1},j_{2},\dots,j_{M}}  
\end{equation}
It can also be seen as the special case of Einstein product of tensors in equation (\ref{EinsteinProduct}) with $N=0$. 
\end{definition} 

\begin{definition}{\textbf{n-mode product}}:
The $n$-mode product of a tensor $\mathscr{A} \in \mathbb{C}^{I_{1} \times I_{2} \times \dots \times I_{N} }$ with a matrix $\text{U} \in \mathbb{C}^{J \times I_{n}}$ is denoted by $\mathscr{A} {\times}_{n} \text{U}$ and is defined as \cite{LathauwerSVD} :
\begin{equation} \label{TensorTimesMatrix}
(\mathscr{A} {\times}_{n} \text{U})_{i_1,i_2,\dots,i_{n-1},j,i_{n+1},\dots,i_N} = \sum_{i_n = 1}^{I_n} \mathscr{X}_{i_1,i_2,\dots,i_N} \text{U}_{j,i_n} 
\end{equation}
Each mode-$n$ fiber is multiplied by the matrix $\text{U}$. The result of $n$-mode product is a tensor of same order but with a new $n$th mode of size $J$. The resulting tensor is of size $I_1 \times I_2 \times \dots \times I_{n-1} \times J \times I_{n+1} \times \dots I_N$.\\
\end{definition}

\begin{definition}{\textbf{Square tensors }}: A tensor $\mathscr{A} \in \mathbb{C}^{I_1 \times \dots \times I_N \times J_1 \times \dots \times J_M}$ is called a square tensor if $N=M$ and $I_k=J_k$ for $k=1,\dots,N$ \cite{TensorDet}.
\end{definition}
   
For square tensors $\mathscr{A},\mathscr{B}$ of size $I \times J \times I \times J$, it was shown in \cite{TamonTensorInversion} that $f_{I,J|I,J}(\mathscr{A}*_2 \mathscr{B}) = f_{I,J|I,J}(\mathscr{A}) \cdot f_{I,J|I,J}(\mathscr{B})$ where $\cdot$ refers to the usual matrix multiplication. It was further generalized to square or non-square tensors of any order and size in \cite{wang2018iterative} as the following Lemma:
\begin{lemma}\label{TransformPropertyLemma}
For tensors $\mathscr{A} \in \mathbb{C}^{I_1 \times \dots \times I_N \times J_1 \times \dots \times J_M}$ and $\mathscr{B} \in \mathbb{C}^{J_1 \times \dots \times J_M \times K_1 \times \dots \times K_P}$ under the transformation from (\ref{Transform}), the following holds:
\begin{equation}\label{TranformProperty}
f_{I_1,\dots,I_N|K_1,\dots,K_P}(\mathscr{A}*_M \mathscr{B})=f_{I_1,\dots,I_N|J_1,\dots,J_M}(\mathscr{A})\cdot f_{J_1,\dots,J_M|K_1,\dots,K_P}(\mathscr{B}) 
\end{equation} 
\end{lemma}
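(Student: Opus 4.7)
The plan is to prove the identity by direct entry-wise verification, exploiting the fact that the matricization map from (\ref{Transform}) encodes each multi-index group $(i_1,\dots,i_N)$ and $(j_1,\dots,j_M)$ via the standard mixed-radix bijection onto a single row or column index. Let $\mathscr{C}=\mathscr{A}*_M\mathscr{B}$, and set $A=f_{I_1,\dots,I_N|J_1,\dots,J_M}(\mathscr{A})$ of size $(I_1\cdots I_N)\times(J_1\cdots J_M)$ and $B=f_{J_1,\dots,J_M|K_1,\dots,K_P}(\mathscr{B})$ of size $(J_1\cdots J_M)\times(K_1\cdots K_P)$. I would fix an arbitrary row index $r=i_1+\sum_{k=2}^{N}(i_k-1)\prod_{l=1}^{k-1}I_l$ and an arbitrary column index $c=k_1+\sum_{k=2}^{P}(k_k-1)\prod_{l=1}^{k-1}K_l$ and show both sides of (\ref{TranformProperty}) agree at entry $(r,c)$.

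First I would compute the LHS: by the definition of matricization in (\ref{Transform}), the $(r,c)$ entry of $f_{I_1,\dots,I_N|K_1,\dots,K_P}(\mathscr{C})$ equals $\mathscr{C}_{i_1,\dots,i_N,k_1,\dots,k_P}$, which by the Einstein product definition (\ref{EinsteinProduct}) equals $\sum_{j_1,\dots,j_M}\mathscr{A}_{i_1,\dots,i_N,j_1,\dots,j_M}\,\mathscr{B}_{j_1,\dots,j_M,k_1,\dots,k_P}$. Next I would compute the RHS: by the usual matrix product, $(A\cdot B)_{r,c}=\sum_{s=1}^{J_1\cdots J_M}A_{r,s}B_{s,c}$. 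The key step is to re-index the summation variable $s$ using the mixed-radix encoding $s(j_1,\dots,j_M)=j_1+\sum_{k=2}^{M}(j_k-1)\prod_{l=1}^{k-1}J_l$, which is a bijection between $\{1,\dots,J_1\}\times\cdots\times\{1,\dots,J_M\}$ and $\{1,\dots,J_1\cdots J_M\}$; this is precisely where the matching of the column partition of $A$ with the row partition of $B$ is used. Substituting and applying the component-wise rule (\ref{Transform}) to read off $A_{r,s(j_1,\dots,j_M)}=\mathscr{A}_{i_1,\dots,i_N,j_1,\dots,j_M}$ and $B_{s(j_1,\dots,j_M),c}=\mathscr{B}_{j_1,\dots,j_M,k_1,\dots,k_P}$ then yields the same double sum as on the LHS.

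The main obstacle, though mostly bookkeeping, is to justify cleanly that the row index scheme used on the $\mathscr{B}$-side of the matricization in $B$ is \emph{identical} to the column index scheme used on the $\mathscr{A}$-side in $A$, so that a single substitution $s\leftrightarrow(j_1,\dots,j_M)$ converts the matrix inner product into the Einstein contraction; this follows because both are defined by the same formula $j_1+\sum_{k=2}^{M}(j_k-1)\prod_{l=1}^{k-1}J_l$ in (\ref{Transform}) applied to the common $J$-modes. Once this alignment is made explicit, equality of the $(r,c)$ entries holds for every admissible $r,c$, and since the matricization map is bijective and dimensionally consistent, the identity (\ref{TranformProperty}) follows at the level of matrices, completing the proof.
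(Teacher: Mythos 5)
Your proposal is correct. Note that the paper itself does not prove this lemma: it states it and defers to the cited references (the square case to \cite{TamonTensorInversion}, the general case to \cite{wang2018iterative}), so there is no in-paper argument to compare against. Your direct component-wise verification is the standard proof and is complete: fixing $r$ and $c$ via the mixed-radix encodings of $(i_1,\dots,i_N)$ and $(k_1,\dots,k_P)$, reading off the left side as the Einstein contraction over $(j_1,\dots,j_M)$, and re-indexing the matrix inner product $\sum_s A_{r,s}B_{s,c}$ through the bijection $s\leftrightarrow(j_1,\dots,j_M)$ gives exactly the same sum. You also correctly identify the one point that needs to be made explicit, namely that the column-index scheme of $A$ and the row-index scheme of $B$ are the \emph{same} mixed-radix map on the shared $J$-modes, which is immediate from (\ref{Transform}). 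The only blemish is notational: in your expression for $c$ you reuse the letter $k$ both as the summation index and in the multi-index $k_k$; renaming the summation index (e.g.\ to $m$) would avoid the collision, but this does not affect the validity of the argument.
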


\begin{definition}{\textbf{Pseudo-diagonal Tensors }}:
Any tensor $\mathscr{D} \in \mathbb{C}^{I_1 \times \dots \times I_N \times J_1 \times \dots \times J_M}$ of order $N+M$ is called pseudo-diagonal if its transformation $\text{D} = f_{I_1,\dots,I_N|J_1,\dots,J_M}(\mathscr{D})$ yields a diagonal matrix such that $\text{D}_{i,j}$ is non-zero only when $i=j$. 
\end{definition}
Since the transformation (\ref{Transform}) is bijective, we can say that a diagonal matrix $\text{D} \in \mathbb{C}^{I \times J}$ under inverse transformation $f^{-1}_{I_1, \dots ,I_N | J_1, \dots ,J_M}(\text{D})$ will yield a pseudo-diagonal tensor where $I=I_1\cdots I_N$ and $J=J_1 \cdots J_M$.  A square tensor $\mathscr{D} \in \mathbb{C}^{I_1 \times \dots \times I_N \times I_1 \times \dots \times I_N}$ is pseudo-diagonal if all its entries $\mathscr{D}_{i_1,\dots,i_N,j_1,\dots,j_N}$ are zero except when $i_1=j_1,i_2=j_2,\dots,i_N=j_N$. In \cite{TamonTensorInversion,SunMoore} such a tensor is termed as diagonal tensor, and in \cite{MLTI1} it is termed as U-diagonal, but we tend to call it pseudo-diagonal for our purpose of discussion, so as to make a clear distinction from the diagonal tensor definition more widely found in literature which states that a diagonal tensor is one where entries $\mathscr{D}_{i_1,\dots ,i_N}$ are zero except when $i_1=i_2=\dots =i_N$ \cite{KoldaTensor}. This can be seen as a more strict diagonal condition as non-zero elements exist only when all the modes have same index whereas in a pseudo-diagonal tensor, say of order $2N$, elements are non-zero when every $i$th and $(i+N)$th mode have same index for $i=1,\dots,N$. An illustration of order 4 tensor showing the difference between diagonal and pseudo-diagonal structures can be found in \cite{MDPIpaper}. For a matrix, which has just two modes, the diagonal and pseudo-diagonal structures are the same. 
Note that pseudo-diagonality is defined with respect to partition after $N$ modes. For instance, if we refer to  a third order tensor as pseudo-diagonal, then it is important to specify whether it is pseudo-diagonal with respect to partition after the first mode or the second mode. So to avoid overload of notation in this paper whenever we write a tensor explicitly as $N+M$ or $2N$ order tensor and call it pseudo-diagonal, then pseudo-diagonality is with respect to partition after $N$ modes. 


\begin{definition}{\textbf{Pseudo-Triangular Tensor:}}
A tensor $\mathscr{A} \in \mathbb{C}^{I_1 \times \dots \times I_N \times I_1 \times \dots \times I_N}$ is defined to be pseudo-lower triangular if
\begin{equation}
\mathscr{A}_{i_1, \dots ,i_N,i_1', \dots ,i_N'} = \begin{cases}0 \quad \text{if } (i_1'+\sum\limits_{k=2}^{N}(i_k'-1)\prod\limits_{l=1}^{k-1}I_l) \geq (i_1+\sum\limits_{k=2}^{N}(i_k-1)\prod\limits_{l=1}^{k-1}I_l) \\
a_{i_1, \dots ,i_N,i_1', \dots ,i_N'} \quad \text{otherwise}
\end{cases}
\end{equation}
where $a_{i_1, \dots ,i_N,i_1', \dots ,i_N'}$ are arbitrary scalars. Similarly, the tensor is said to be pseudo-upper triangular if 
\begin{equation}
\mathscr{A}_{i_1, \dots ,i_N,i_1', \dots ,i_N'} = \begin{cases}0 \quad \text{if } (i_1'+\sum\limits_{k=2}^{N}(i_k'-1)\prod\limits_{l=1}^{k-1}I_l) \leq (i_1+\sum\limits_{k=2}^{N}(i_k-1)\prod\limits_{l=1}^{k-1}I_l) \\
a_{i_1, \dots ,i_N,i_1', \dots ,i_N'} \quad \text{otherwise}
\end{cases}
\end{equation} 
An illustration of an upper triangular tensor of size $J_1 \times J_2 \times I_1 \times I_2$ with $I_1 = I_2 = J_1 = J_2 = 3$ is presented in Figure \ref{Pseudo_up} and its pseudo-upper triangular elements highlighted in gray along with its pseudo-diagonal elements shown in black. A similar illustration of a lower triangular tensor can be found in \cite{AdithyaPaper}. It can be readily seen that a lower triangular tensor becomes a lower triangular matrix under the tensor to matrix transformation defined in \eqref{Transform} and a pseudo-upper triangular tensor becomes an upper triangular matrix.
\end{definition}

\begin{figure}
\center
\includegraphics[scale=1.0]{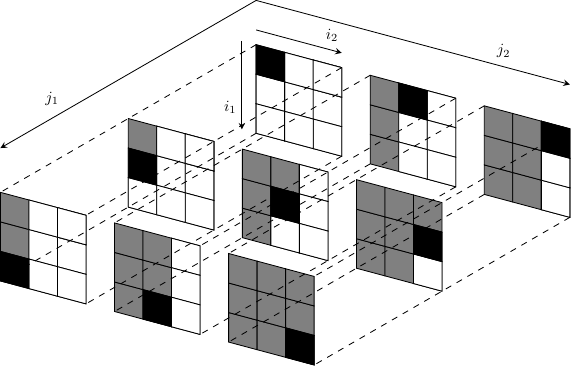}
\caption{Pseudo-upper triangular Tensor \label{Pseudo_up}}
\end{figure}

\begin{definition}{\textbf{Identity Tensor }}:
An identity tensor, $\mathscr{I}_N \in \mathbb{C}^{I_1 \times \dots \times I_N \times I_1 \times \dots \times I_N }$ is a pseudo-diagonal tensor of order $2N$ such that for any tensor $\mathscr{A} \in \mathbb{C}^{I_1 \times \dots \times I_N \times I_1 \times \dots \times I_N }$, we have $\mathscr{A}*_N\mathscr{I}_N=\mathscr{I}_N *_N\mathscr{A}=\mathscr{A} $ and in which all non-zero entries are 1, i.e.
\begin{equation}\label{IdentityTensor}
(\mathscr{I}_{N})_{i_1,i_2,\dots,i_N,j_1,j_2,\dots,j_N}=\prod_{k=1}^{N}\delta_{i_kj_k},\,\,\,\,\, \text{where }  \delta_{pq}= 
\begin{cases}
    1,& p = q\\
    0,& p \neq q
\end{cases}
\end{equation}
\end{definition}

\subsubsection{Transpose, Hermitian and Inverse of a Tensor}
The transpose of a matrix is a permutation of its two indices corresponding to rows and columns. Since elements of a higher-order tensor are indexed by multiple indices, there are several permutations of such indices and hence there can be multiple ways to write the transpose or Hermitian of a tensor. Such permutation dependent transpose of a tensor is defined in \cite{pan2014tensor}.

Assume the set $S_N = \{1,2, \dots ,N \}$ and $\sigma$ is a permutation of $S_N$.  We denote $\sigma(j) = i_j\ \text{for } j = 1,2, \dots , N$ where $\{i_1,i_2, \dots , i_N\} = \{1,2, \dots, N\} = S_N$. Since $S_N$ is a finite set with $N$ elements, it has $N!$ different permutations. Hence, discounting the identity permutation $\sigma(j) = [1,2, \dots, N]$, there are $N!-1$ different transposes for a tensor with $N$ dimensions or modes. For a tensor $\mathscr{A} \in \mathbb{C}^{I_1 \times \dots \times I_N}$ we define its transpose associated with a certain permutation $\sigma$ as $\mathscr{A}^{T \sigma} \in \mathbb{C}^{I_{\sigma(1)} \times \dots \times I_{\sigma(N)}}$ with entries 
\begin{equation}
\mathscr{A}^{T \sigma}_{i_{\sigma (1)},i_{\sigma (2)}, \dots , i_{\sigma (N)}} = \mathscr{A}_{i_1,i_2, \dots , i_N}.
\end{equation}
Similarly, the Hermitian of a tensor $\mathscr{A} \in \mathbb{C}^{I_1 \times \dots \times I_N}$ associated with a permutation $\sigma$ is defined as the conjugate of its transpose and is denoted as $\mathscr{A}^{H \sigma} = (\mathscr{A}^{T \sigma})^* \in \mathbb{C}^{I_{\sigma(1)} \times \dots \times I_{\sigma(N)}}$ with entries
\begin{equation}
\mathscr{A}^{H \sigma}_{i_{\sigma (1)},i_{\sigma (2)}, \dots , i_{\sigma (N)}} = (\mathscr{A}^{T \sigma}_{i_{\sigma (1)},i_{\sigma (2)}, \dots , i_{\sigma (N)}})^* = (\mathscr{A}_{i_1,i_2, \dots , i_N})^*.
\end{equation}
For example, a transpose of a third-order tensor $\mathscr{X} \in \mathbb{C}^{I_1 \times I_2 \times I_3}$ such that its third mode is transposed with the first can be written as $\mathscr{X}^{T \sigma}$ where $\sigma = [3,2,1]$ with components $\mathscr{X}^{T \sigma}_{i_3,i_2,i_1} = \mathscr{X}_{i_1,i_2,i_3}$. 
For two tensors $\mathscr{A} \in \mathbb{C}^{I_1 \times \dots \times I_N}$ and $\mathscr{B} \in \mathbb{C}^{I_1 \times \dots \times I_N}$ we have \cite{pan2014tensor}
\begin{equation}
\langle \mathscr{A},\mathscr{B} \rangle = \langle \mathscr{A}^{T\sigma},\mathscr{B}^{T\sigma} \rangle.
\end{equation}

Consider a tensor $\mathscr{Y} \in \mathbb{C}^{I_1 \times \dots \times I_N \times J_1 \times \dots \times J_M}$ with a transposition such that the final $M$ modes are swapped with the first $N$ modes can be represented by a permutation function $\sigma = [(N+1),\dots(N+M), 1, \ \dots \ N]$ such that $\mathscr{Y}^{T \sigma}_{j_1, \dots, j_M, i_1, \dots i_N} = \mathscr{Y}_{i_1, \dots, i_N,j_1, \dots , j_M}$. Since we will use tensors to define system theory elements with fixed order $M$ output and order $N$ input, the most often encountered case of transpose or Hermitian in this paper would be after $N$ modes of an $N+M$ or $2N$ tensor, i.e. $\sigma = [(N+1),\dots(N+M), 1, \ \dots \ N]$. Henceforth, in such a case we drop the superscript $\sigma$ for ease of representation and represent such a transpose by  $\mathscr{Y}^{T}$ and its conjugate by $\mathscr{Y}^H$.

Further, a square tensor $\mathscr{U} \in \mathbb{C}^{I_1 \times \dots \times I_N \times I_1 \times \dots \times I_N}$ is called a \textit{unitary} tensor if $\mathscr{U}^H*_N \mathscr{U} = \mathscr{U}*_N \mathscr{U}^H = \mathscr{I}_N$.

The tensor $\mathscr{A}^{-1} \in \mathbb{C}^{I_1 \times \dots \times I_N \times I_1 \times \dots \times I_N}$ is an \textit{inverse} of a square tensor of same size, $\mathscr{A} \in \mathbb{C}^{I_1 \times \dots \times I_N \times I_1 \times \dots \times I_N}$ if $\mathscr{A} *_N \mathscr{A}^{-1} = \mathscr{A}^{-1} *_N \mathscr{A} =\mathscr{I}_N$ \cite{TensorDet}. The inverse of a tensor exists if its transformation $f_{I_1,\dots,I_N|I_1,\dots,I_N}(\mathscr{A})$ is invertible \cite{TamonTensorInversion}. Several algorithms using the Einstein product such as Higher-order Bi-conjugate Gradient method  \cite{TamonTensorInversion} or Newton's method \cite{FICCpaper} can be used to find tensor inverse without relying on actually transforming the tensor into a matrix.   

As a generalization of the matrix Moore-Penrose inverse, the \textit{Moore-Penrose inverse} of a tensor $\mathscr{A} \in \mathbb{C}^{I_1 \times \dots \times I_N \times J_1 \times \dots \times J_N}$ is defined as a tensor $\mathscr{A}^{+} \in \mathbb{C}^{J_1 \times \dots \times J_N \times I_1 \times \dots \times I_N}$ that satisfies \cite{SunMoore, MPIExnMishra}:
\begin{align}
&{\mathscr{A}}*_N{\mathscr{A}}^+*_N{\mathscr{A}} = {\mathscr{A}},\notag\\
&{\mathscr{A}}^+*_N{\mathscr{A}}*_N{\mathscr{A}}^+ = {\mathscr{A}}^+,\notag\\
&({\mathscr{A}}*_N{\mathscr{A}}^+)^H = {\mathscr{A}}*_N{\mathscr{A}}^+,\notag\\
&({\mathscr{A}}^+*_N{\mathscr{A}})^H = {\mathscr{A}}^+*_N{\mathscr{A}}.\notag
\end{align}
For a tensor $\mathscr{A} \in \mathbb{C}^{I_1 \times \dots \times I_N \times J_1 \times \times \dots \times J_N}$, the Moore-Penrose inverse always exist and is unique \cite{SunMoore}.   

Based on the definition of tensor inverse, Hermitian, and the Einstein product, several tensor algebra relations and properties can be derived. Here we present a few properties which are often used. Along similar lines, several more properties can be derived. 
\begin{enumerate}[label=(\alph*)]
\item \textit{Associativity} : For tensors $\mathscr{A} \in \mathbb{C}^{I_1 \times \dots \times I_P \times J_1 \times \dots \times J_N}$, $\mathscr{B} \in \mathbb{C}^{J_1 \times \dots \times J_N \times K_1 \times \dots \times K_M}$ and $\mathscr{C} \in \mathbb{C}^{K_1 \times \dots \times K_M \times T_1 \times \dots \times T_Q}$, we have
\begin{equation}\label{associative}
(\mathscr{A} *_N \mathscr{B}) *_M \mathscr{C} = \mathscr{A} *_N (\mathscr{B} *_M \mathscr{C})
\end{equation}
\begin{proof} 
\begin{align}
&(({\mathscr{A}} *_N {\mathscr{B}}) *_M {\mathscr{C}})_{i_1,\dots,i_P,t_1,\dots,t_Q} = \sum_{k_1,\dots,k_M}\Big(\sum_{j_1,\dots,j_N}{\mathscr{A}}_{i_1,\dots,i_P,j_1,\dots,j_N}{\mathscr{B}}_{j_1,\dots,j_N,k_1,\dots,k_M} \Big){\mathscr{C}}_{k_1,\dots,k_M,t_1,\dots,t_Q} \nonumber \\
&=  \sum_{j_1,\dots,j_N}{\mathscr{A}}_{i_1,\dots,i_P,j_1,\dots,j_N}\sum_{k_1,\dots,k_M}{\mathscr{B}}_{j_1,\dots,j_N,k_1,\dots,k_M}{\mathscr{C}}_{k_1,\dots,k_M,t_1,\dots,t_Q} = ({\mathscr{A}} *_N ({\mathscr{B}} *_M {\mathscr{C}}))_{i_1,\dots,i_P,t_1,\dots,t_Q} \nonumber
\end{align}

\end{proof}

\item \textit{Commutativity} : Einstein product is not commutative in general. However for the specific case where the product is taken over all the $N$ modes of one of the tensors, say for tensors $\mathscr{A} \in \mathbb{C}^{I_1 \times \dots \times I_P \times J_1 \times \dots \times J_N}$ and $\mathscr{B} \in \mathbb{C}^{J_1 \times \dots \times J_N}$, the following can be established :
\begin{equation}\label{commutative}
\mathscr{A} *_N \mathscr{B} = \mathscr{B} *_N \mathscr{A}^T
\end{equation}
\begin{proof}
\begin{align}
({\mathscr{A}} *_N {\mathscr{B}})_{i_1,\dots,i_P} &=\sum_{j_1,\dots,j_N}{\mathscr{A}}_{i_1,\dots,i_P,j_1,\dots,j_N}{\mathscr{B}}_{j_1,\dots,j_N} \nonumber \\
& = \sum_{j_1,\dots,j_N}{\mathscr{B}}_{j_1,\dots,j_N}{\mathscr{A}}_{j_1,\dots,j_N,i_1,\dots,i_P}^T =  ({\mathscr{B}} *_N {\mathscr{A}}^T)_{i_1,\dots,i_P} \nonumber
\end{align}
\end{proof}

\item \textit{Distributivity} : For tensors, ${\mathscr{A}},{\mathscr{B}} \in \mathbb{C}^{I_1 \times \dots \times I_P \times J_1 \times \dots \times J_N}$ and ${\mathscr{C}} \in \mathbb{C}^{J_1 \times \dots \times J_N \times K_1 \times \dots \times K_M}$, we have :
\begin{equation}\label{distributive}
({\mathscr{A}}+{\mathscr{B}})*_N {\mathscr{C}} = ({\mathscr{A}}*_N {\mathscr{C}})+({\mathscr{B}}*_N {\mathscr{C}})   
\end{equation}
\begin{proof}
\begin{equation*}
\begin{aligned}
&(({\mathscr{A}}+{\mathscr{B}})*_N {\mathscr{C}})_{i_1,\dots,i_P,k_1,\dots,k_M} = \sum_{j_1,\dots,j_N}({\mathscr{A}}_{i_1,\dots,i_P,j_1,\dots,j_N} + {\mathscr{B}}_{i_1,\dots,i_P,j_1,\dots,j_N}){\mathscr{C}}_{j_1,\dots,j_N,k_1,\dots,k_M}\\
&=(\sum_{j_1,\dots,j_N}{\mathscr{A}}_{i_1,\dots,i_P,j_1,\dots,j_N}{\mathscr{C}}_{j_1,\dots,j_N,k_1,\dots,k_M}) + (\sum_{j_1,\dots,j_N}{\mathscr{B}}_{i_1,\dots,i_P,j_1,\dots,j_N}{\mathscr{C}}_{j_1,\dots,j_N,k_1,\dots,k_M}) \\
& = ({\mathscr{A}}*_N {\mathscr{C}})_{i_1,\dots,i_P,k_1,\dots,k_M}+({\mathscr{B}}*_N {\mathscr{C}})_{i_1,\dots,i_P,k_1,\dots,k_M}
\end{aligned}
\end{equation*}
\end{proof}
\item For tensors ${\mathscr{A}} \in \mathbb{C}^{I_1 \times \dots \times I_M \times J_1 \times \dots \times J_N}$ and ${\mathscr{B}} \in \mathbb{C}^{J_1 \times \dots \times J_N \times K_1 \times \dots \times K_P}$, we have :
\begin{equation}
({\mathscr{A}} *_N {\mathscr{B}})^H = {\mathscr{B}}^H *_N {\mathscr{A}}^H 
\end{equation}
\begin{proof}
\begin{equation*}
\begin{aligned}
({\mathscr{A}} *_N {\mathscr{B}})_{k_1,\dots,k_P,i_1,\dots,i_M}^H &= \Big(\sum_{j_1,\dots,j_N}{\mathscr{A}}_{i_1,\dots,i_M,j_1,\dots,j_N}{\mathscr{B}}_{j_1,\dots,j_N,k_1,\dots,k_P}\Big)^*\\
&= \sum_{j_1,\dots,j_N}{\mathscr{A}}_{i_1,\dots,i_M,j_1,\dots,j_N}^*{\mathscr{B}}_{j_1,\dots,j_N,k_1,\dots,k_P}^* \\
&= \sum_{j_1,\dots,j_N}{\mathscr{B}}_{k_1,\dots,k_P,j_1,\dots,j_N}^H {\mathscr{A}}_{j_1,\dots,j_N,i_1,\dots,i_M}^H \\
&= ({\mathscr{B}}^H *_N {\mathscr{A}}^H )_{k_1,\dots,k_P,i_1,\dots,i_M}
\end{aligned}
\end{equation*}
\end{proof}

\item For square tensors ${\mathscr{A}}$ and ${\mathscr{B}} \in \mathbb{C}^{I_1 \times \dots \times I_N \times I_1 \times \dots \times I_N}$, we have :
\begin{equation}
({\mathscr{A}} *_N {\mathscr{B}})^{-1} = {\mathscr{B}}^{-1} *_N {\mathscr{A}}^{-1} 
\end{equation}  
\begin{proof}
Let $ {\mathscr{C}} = {\mathscr{B}}^{-1} *_N {\mathscr{A}}^{-1}$. If ${\mathscr{C}}$ is an inverse of $({\mathscr{A}} *_N {\mathscr{B}})$ then as per the definition of tensor inverse, it should satisfy $({\mathscr{A}} *_N {\mathscr{B}}) *_N {\mathscr{C}} = {\mathscr{C}} *_N ({\mathscr{A}} *_N {\mathscr{B}}) = {\mathscr{I}}_N$.
\begin{equation*}
\begin{aligned}
({\mathscr{A}} *_N {\mathscr{B}}) *_N ({\mathscr{B}}^{-1} *_N {\mathscr{A}}^{-1} ) &= {\mathscr{A}} *_N ( {\mathscr{B}} *_N {\mathscr{B}}^{-1}) *_N {\mathscr{A}}^{-1} \,\,\,\,\,\,\, \text{(from associativity)} \\
&= ({\mathscr{A}} *_N {\mathscr{I}}_N) *_N {\mathscr{A}}^{-1} = {\mathscr{A}} *_N {\mathscr{A}}^{-1} = {\mathscr{I}}_N\\
\text{Similarly, } ({\mathscr{B}}^{-1} *_N {\mathscr{A}}^{-1} ) *_N ({\mathscr{A}} *_N {\mathscr{B}}) &= {\mathscr{B}}^{-1} *_N ( {\mathscr{A}}^{-1} *_N {\mathscr{A}}) *_N {\mathscr{B}} = {\mathscr{I}}_N\\ 
\end{aligned}
\end{equation*}
\end{proof}
\end{enumerate}

\subsubsection{Function Tensors}
A \textit{function tensor} ${\mathscr{A}}(x) \in \mathbb{C}_x^{I_1 \times \dots \times I_N}$ is an order $N$ tensor whose components are functions of $x$. Using a third order function tensor as an example, each component of ${\mathscr{A}}(x)$ is written as ${\mathscr{A}}_{i,j,k}(x)$. If $x$ takes discrete values, we represent the function tensor using square bracket notation as $\mathscr{A}[x]$.

A generalization of the function tensor would be the \textit{multivariate function tensor} ${\mathscr{A}}(x_1,\dots,x_p) \in \mathbb{C}^{I_1 \times \dots \times I_N}_{x_1,\dots,x_p}$, which is an order $N$ tensor whose components are functions of the continuous variables $x_1,\dots,x_p$. If the variables take discrete values, we denote the function tensor as $\mathscr{A}[x_1,\cdots,x_p]$. Using the same example of a third order tensor, each component can be written as ${\mathscr{A}}_{i,j,k}(x_1,x_2,\dots,x_p)$. 

A linear system is often expressed as $\text{A}\underline{\text{x}}=\underline{\text{b}}$ where $\text{A} \in \mathbb{C}^{M \times N}$ is a matrix operating upon the vector $\underline{\text{x}} \in \mathbb{C}^{N}$ to produce another vector $\underline{\text{b}} \in \mathbb{C}^M$ \cite{TamonTensorInversion}. Essentially, the matrix defines a linear operator $\mathcal{L} : \mathbb{C}^{N} \rightarrow \mathbb{C}^M$ between two vector linear spaces $\mathbb{C}^N$ and $\mathbb{C}^M$. A multi-linear system can be thus defined as a linear operator between two tensor linear spaces $\mathbb{C}^{I_1 \times \dots \times I_N}$ and $\mathbb{C}^{J_1 \times \dots \times J_M}$, i.e. $\mathcal{ML} : \mathbb{C}^{I_1 \times \dots \times I_N} \rightarrow \mathbb{C}^{J_1 \times \dots \times J_M}$. Multi-linear systems model several phenomenon in various science and engineering applications. However often in literature, a multi-linear system is degenerated into a linear system by mapping the tensor linear space $\mathbb{C}^{I_1 \times \dots \times I_N}$ into a vector linear space $\mathbb{C}^{I_1\cdots I_N}$ through vectorization. The vectorization process allows one to use tools from linear algebra for convenience but also leads to a representation where distinction between different modes of the system is lost. Thus possible hidden patterns, structures, and correlations cannot be explicitly identified in the vectorized tensor entities. With the help of tensor contracted product, one can develop signals and system representation without having to rely on vectorization, at the same time extending tools from linear to a multi-linear setting intuitively. 

\subsection{Discrete Time Signal Tensors}

A \textit{discrete time signal tensor} ${\mathscr{X}}[n] \in \mathbb{C}^{I_1 \times \dots \times I_N}_n$ is a function tensor whose components are functions of the sampled time index $n$. A discrete tensor signal can also be called a tensor sequence indexed by $n$.

A multi-linear time invariant \textit{discrete system tensor} is an order $N+M$ tensor sequence $\mathscr{H}[k] \in \mathbb{C}_k^{J_1 \times \dots \times J_M \times I_1 \times \dots \times I_N}$ that couples an input tensor sequence $\mathscr{X}[k] \in \mathbb{C}_k^{I_1 \times \dots \times I_N}$ of order $N$ with an output tensor sequence $\mathscr{Y}[k] \in \mathbb{C}_k^{J_1 \times \dots \times J_M}$ of order $M$ through a discrete contracted convolution defined as:
\begin{equation}\label{dfeeq1}
\mathscr{Y}[k] = \sum_{n}\{\mathscr{H}[n],\mathscr{X}[k-n]\}_{\{M+1,\dots,M+N ; 1,\dots,N\}}.
\end{equation}
Most often the ordering of the modes while defining such system tensors is fixed, where the system tensor contracts over all the input modes. Hence for a more compact notation, we can define the contracted convolution using the Einstein product as:
\begin{equation}\label{dfeeq2}
\mathscr{Y}[k] = \sum_{n} \mathscr{H}[n] *_N \mathscr{X}[k-n]. 
\end{equation}
In scalar signals and systems notations, a convolution between two functions is often represented using an asterisk $(*)$. However, to make a distinction with the Einstein product notation which also uses the asterisk symbol, we denote the contracted convolution using the notation $\bullet_N$, i.e.
\begin{equation}\label{dfeeq3}
\mathscr{Y}[k] = \mathscr{H}[k] \bullet_N \mathscr{X}[k] = \sum_{n} \mathscr{H}[n] *_N \mathscr{X}[k-n]. 
\end{equation}
The complex frequency domain representation of discrete signal tensors can be given using the z-transform of the signal tensors, as discussed next.

\begin{definition}{\textbf{z-transform of a Discrete Tensor Sequence:}}
The z-transform of $\mathscr{X}[n] \in \mathbb{C}^{I_1 \times \dots I_N}_n$ denoted by $\breve{\mathscr{X}}(z)=\mathcal{Z}(\mathscr{X}[k]) \in \mathbb{C}^{I_1 \times \dots \times I_N}_z$ is a tensor of the z-transform of its components defined as
\begin{equation}
\breve{\mathscr{X}}(z) = \mathcal{Z}(\mathscr{X}[k]) = \sum\limits_{n}\mathscr{X}[n]z^{-n}
\end{equation}
with components $\breve{\mathscr{X}}_{i_1, \dots , i_N}(z) = \sum\limits_{n}\mathscr{X}_{i_1, \dots , i_N}[n]z^{-n}$.
\end{definition}
The discrete time Fourier transform denoted by $\bar{\mathscr{X}}(\omega) = \mathcal{F}(\mathscr{X}[k])$ of a tensor sequence can be found by substituting $z=e^{j\omega}$ in its z-transform as
\begin{equation}
\bar{\mathscr{X}}(\omega) =\breve{\mathscr{X}}(z)\bigg\rvert_{z=e^{j \omega}} = \sum\limits_{n}\mathscr{X}[n]z^{-n} \bigg\rvert_{z=e^{j \omega}} = \sum\limits_{n}\mathscr{X}[n]e^{-j\omega n}. 
\end{equation}

Taking the z-transform of \eqref{dfeeq3} we get
\begin{align}
\breve{\mathscr{Y}}(z) &= \sum_{k}\mathscr{Y}[k]z^{-k} = \sum_{k}\bigg(\sum_{n}\mathscr{H}[n]*_N \mathscr{X}[k-n]\bigg)z^{-k} \notag\\
&= \sum_{k}\bigg(\sum_{n}\mathscr{H}[n] *_N \mathscr{X}[k-n]\bigg)z^{n-k}z^{-n} = \sum_{n} \mathscr{H}[n] *_N \bigg(\sum_{k}\mathscr{X}[k-n]z^{n-k}\bigg) z^{-n} \notag\\
&= \sum_{n}\mathscr{H}[n]z^{-n} *_N \breve{\mathscr{X}}(z) = \bigg(\sum_{n}\mathscr{H}[n]z^{-n}\bigg) *_N \breve{\mathscr{X}}(z) \notag\\\label{dtraneq1}
&= \breve{\mathscr{H}}(z) *_N \breve{\mathscr{X}}(z).
\end{align}
which shows that the discrete contracted convolution between two tensors in the time domain as given by \eqref{dfeeq3} leads to the Einstein product between the tensors in the z-domain.

\subsection{Continuous Time Signal Tensors}\label{ContSigTen}
A \textit{continuous time signal tensor} ${\mathscr{X}}(t) \in \mathbb{C}^{I_1 \times \dots \times I_N}_t$ is a function tensor whose components are functions of the continuous time variable $t$. 

A multi-linear time invariant \textit{continuous system tensor} is an order $N+M$ tensor $\mathscr{H}(t) \in \mathbb{C}_t^{J_1 \times \dots \times J_M \times I_1 \times \dots \times I_N}$ that couples an order $N$ input continuous tensor signal $\mathscr{X}(t) \in \mathbb{C}_t^{I_1 \times \dots \times I_N}$ with an order $M$ output tensor signal $\mathscr{Y}(t) \in \mathbb{C}_t^{J_1 \times \dots \times J_M}$ through a continuous contracted convolution defined as:
\begin{equation}\label{cfeeq1}
\mathscr{Y}(t) = \int \{\mathscr{H}(u),\mathscr{X}(t-u)\}_{\{M+1,\dots,M+N ; 1,\dots,N\}} du.
\end{equation}
In cases where the mode sequence is fixed, similar to the discrete case, we can define a more compact notation using the Einstein product as:
\begin{equation}\label{cfeeq2}
\mathscr{Y}(t) = \mathscr{H}(t) \bullet_N \mathscr{X}(t) = \int \mathscr{H}(u) *_N \mathscr{X}(t-u) du. 
\end{equation}
The frequency domain representation of continuous signal tensors can be given using the Fourier transform of the signal tensor as defined next.

\begin{definition}{\textbf{ Fourier transform}}:
The Fourier transform of $\mathscr{X}(t) \in \mathbb{C}^{I_1 \times \dots \times I_N}_t$ denoted by $\bar{\mathscr{X}}(\omega)=\mathcal{F}(\mathscr{X}(t))$ is a tensor of the Fourier transform of its components defined as
\begin{equation}
\bar{\mathscr{X}}(\omega) = \mathcal{F}(\mathscr{X}(\omega)) = \int \mathscr{X}(t)e^{-j\omega t}dt
\end{equation}
with components $\bar{\mathscr{X}}_{i_1, \dots , i_N}(\omega) = \int \mathscr{X}_{i_1, \dots , i_N}(t)e^{-i \omega t} dt$.
\end{definition}
Using similar line of derivation as for \eqref{dtraneq1}, it can be shown that \eqref{cfeeq2} can be written in frequency domain as $\bar{\mathscr{Y}}(\omega)= \bar{\mathscr{H}}(\omega) *_N \bar{\mathscr{X}}(\omega)$.

\section{Tensor Networks}\label{Sec3}
Tensor Network (TN) diagrams are a graphical way of illustrating tensor operations \cite{cichocki2014era}. A TN diagram uses a node to represent a tensor and each outgoing edge from a node represents a mode of the tensor. As such, a vector can be represented through a node with a single edge, a matrix through a node with double edges, and an order $N$ tensor through a node with $N$ edges. This is illustrated in Figure \ref{VecMatTen}. Any form of tensor contraction can be visually presented through a TN diagram.
\begin{figure}
\center
\includegraphics[scale=0.8]{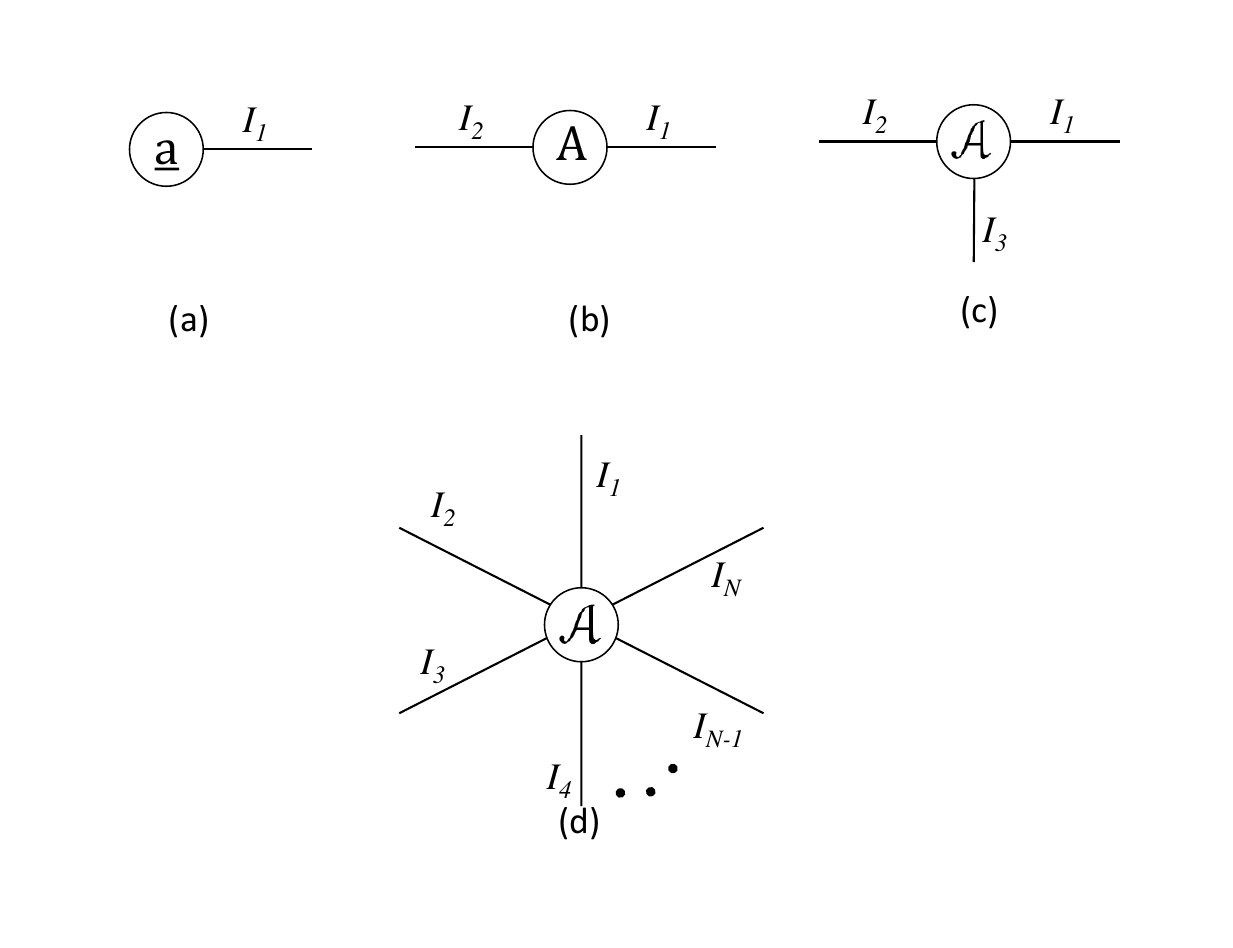}
\caption{TN diagram representation of (a) vector of size $I_1$, (b) matrix of size $I_1 \times I_2$, (c) order 3 tensor of size $I_1 \times I_2 \times I_3$, (d) order $N$ tensor of size $I_1 \times \dots \times I_N$ \label{VecMatTen}}
\end{figure}

\subsection{Illustration of Contraction Products}
A contraction between two modes of a tensor is represented in TN by connecting the edges corresponding to the modes which are to be contracted. Hence the number of free edges represent the order of the resulting tensor. As such any contracted product can be illustrated through a TN diagram, and a few examples are shown in Figure \ref{CP_TN_Fig}. In Figure \ref{CP_TN_Fig}(a), the mode-$n$ product of a tensor $\mathscr{A} \in \mathbb{C}^{I_1 \times \dots \times I_N}$ with $\text{U} \in \mathbb{C}^{J \times I_n}$, i.e. $\mathscr{A} \times_n \text{U}$ from \eqref{TensorTimesMatrix} is depicted where the $n$th edge of $\mathscr{A}$ is connected with the second edge of $\text{U}$ to represent the contraction of these modes of same dimension. Figure \ref{CP_TN_Fig}(b) shows the inner product between two third order tensors $\mathscr{A},\mathscr{B} \in \mathbb{C}^{I \times J \times K}$ where all the edges of both the tensors are connected. Since there is no free edge remaining, the result is a scalar. In Figure \ref{CP_TN_Fig}(c), a fifth order tensor $\mathscr{A} \in \mathbb{C}^{I \times J \times K \times L \times M}$ contracts with a fourth order tensor $\mathscr{B} \in \mathbb{C}^{J \times P \times L \times N}$ along its two common modes as $\{\mathscr{A},\mathscr{B}\}_{\{2,4;1,3\}}$. The resulting tensor is an order 5 tensor as there are total 5 free edges in the diagram. Finally, Figure \ref{CP_TN_Fig}(d) shows the Einstein product between tensors from \eqref{EinsteinProduct} where the common $N$ modes are connected and we have $P+M$ free edges.    
\begin{figure}
\center
\includegraphics[scale=0.5]{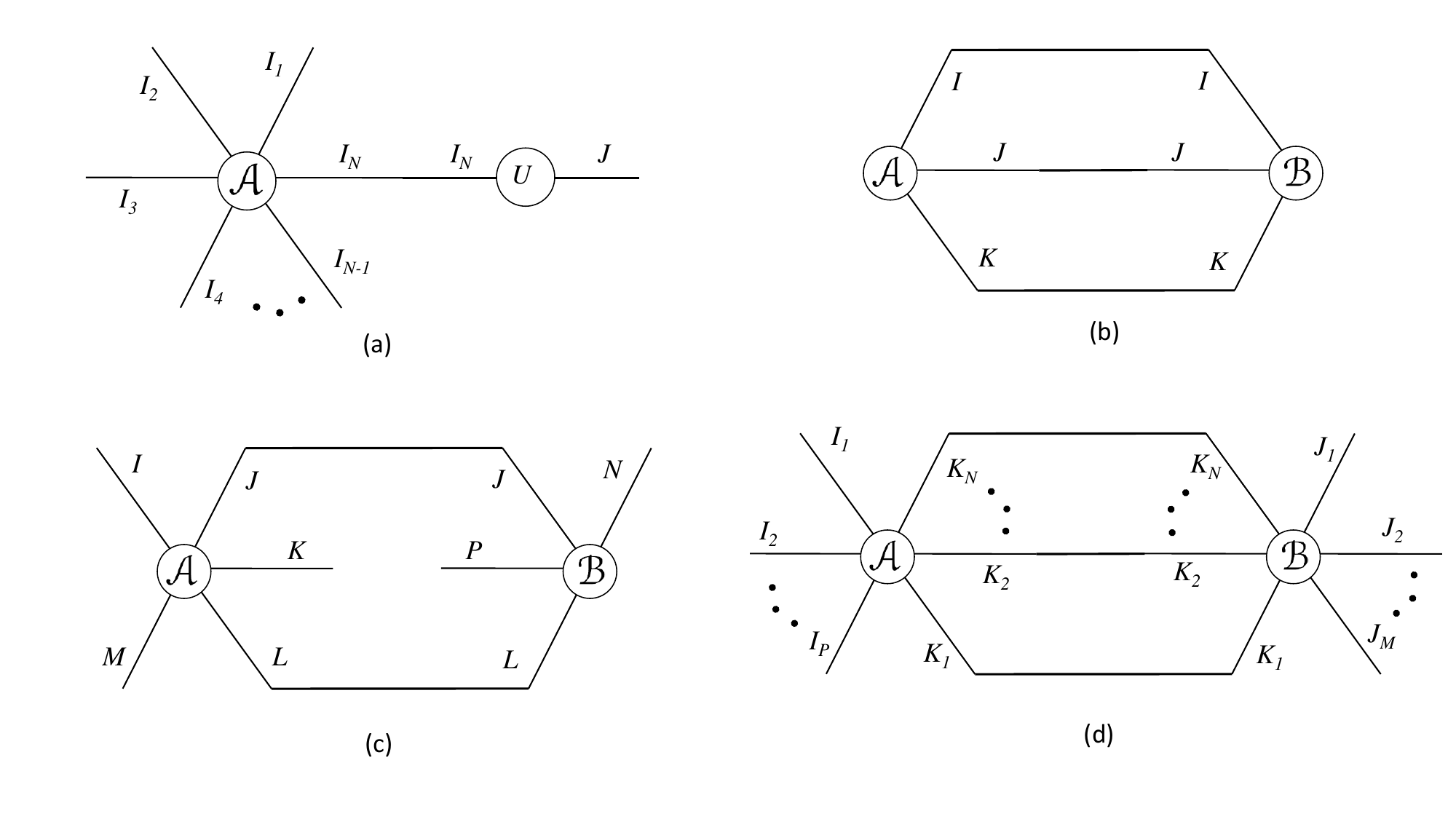}
\caption{TN representation of Contracted Product (a) mode-$n$ product between $\mathscr{A} \in \mathbb{C}^{I_1 \times \dots \times I_N}$ and $\text{U} \in \mathbb{C}^{J \times I_n}$ (b) between tensor $\mathscr{A},\mathscr{B} \in \mathbb{C}^{I \times J \times K}$ over all the three modes (inner product) (c) between $\mathscr{A} \in \mathbb{C}^{I \times J \times K \times L \times M}$ and $\mathscr{B} \in \mathbb{C}^{J \times P \times L \times N}$ as $\{\mathscr{A},\mathscr{B}\}_{\{2,4;1,3\}}$, (d) Einstein product between tensors $\mathscr{A} \in \mathbb{C}^{I_1 \times \dots \times I_P \times K_1 \times \dots \times K_N}$ and $\mathscr{B} \in \mathbb{C}^{K_1 \times \dots \times K_N \times J_1 \times \dots \times J_M}$. \label{CP_TN_Fig}}
\end{figure}

\subsection{Illustration of Contracted Convolutions}
A contracted convolution is an operation between tensor functions. A function tensor in a TN diagram is represented by a node which is a function of a variable. To represent a function tensor in a TN we use a rectangular node rather than a circular node. For a given value of the time index $k$, the contracted convolution from \eqref{dfeeq3} can be depicted using a TN diagram as shown in Figure \ref{ContConv_Fig}. Note that each $\mathscr{Y}[k]$ is calculated by computing Einstein product for all the values of $n$, hence the TN diagram contains a sequence of Einstein product representations between $\mathscr{H}[n]$ and $\mathscr{X}[k-n]$. We suggest a compact representation of the contracted convolution similar to contracted product, where we connect the edges of the function tensor using a dotted line to represent a contracted convolution as shown in Figure \ref{ContConv_Fig2}. This creates a distinction between the two representations. If the corresponding edges are connected via solid lines it represents contracted product, and if they are connected via dotted lines it represents contracted convolution.

Very often, TN diagrams are used to represent tensor operations as it provides a better visual understanding, and thereby aids in developing algorithms to compute tensor operations by making use of elements from graph theory and data structures. Furthermore, a TN diagram can also be used to illustrate how a tensor is formed from several other component tensors. Hence, most tensor decompositions studied in literature are often represented using a TN. In the next section, we discuss some tensor decompositions.

\begin{figure}
\center
\includegraphics[scale=0.5]{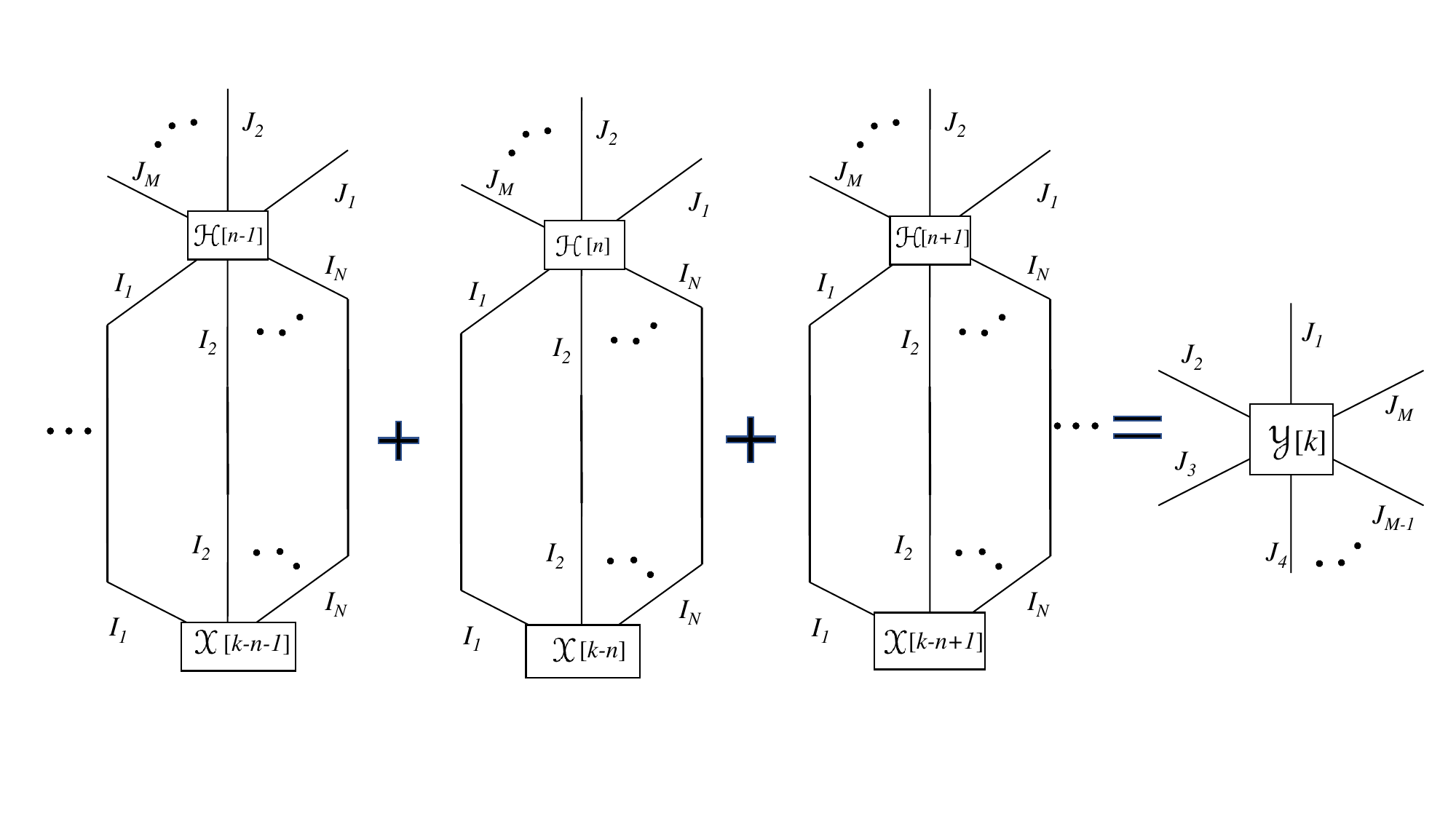}
\caption{TN representation of Contracted Convolution from \eqref{dfeeq3}. \label{ContConv_Fig}}
\end{figure}

\begin{figure}
\center
\includegraphics[scale=0.5]{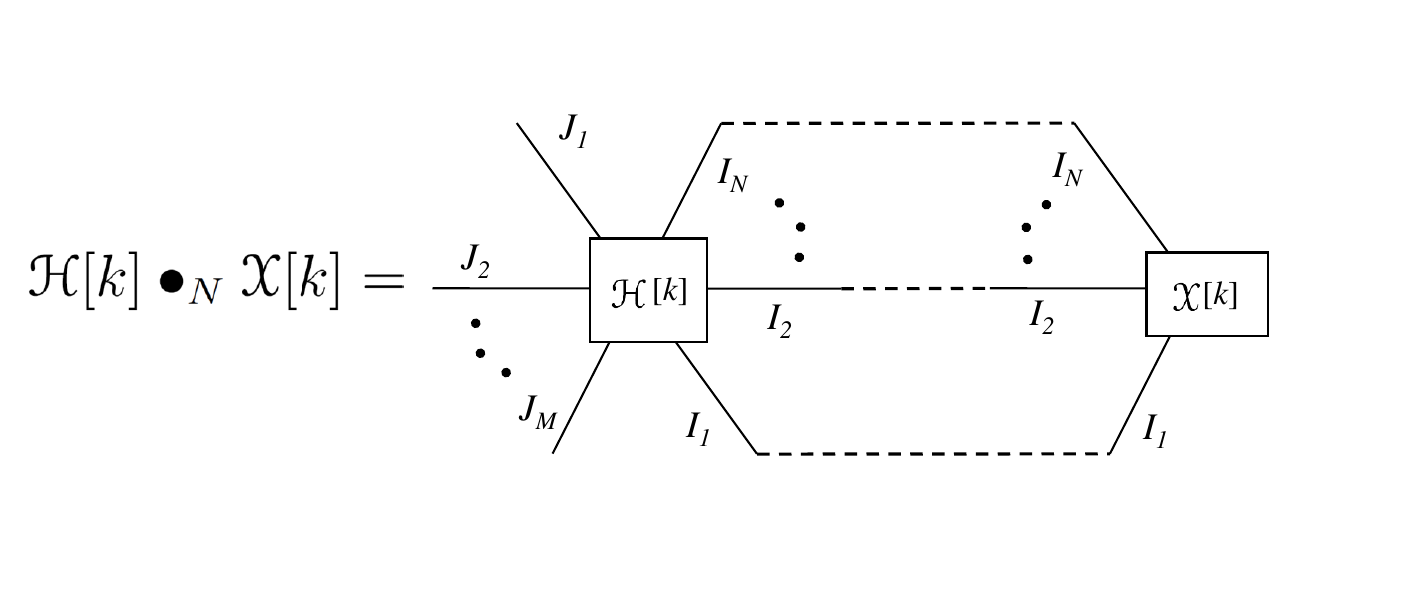}
\caption{Compact TN representation of Contracted Convolution from \eqref{dfeeq3}. \label{ContConv_Fig2}}
\end{figure}

\section{Tensor Decompositions}\label{Sec4}

Several Tensor decompositions such as the Tucker decomposition, Canonical Polyadic (CP) or the Parallel Factor (PARAFAC) decomposition, Tensor Train decomposition and many more have been extensively studied in the literature \cite{KoldaTensor, NikosTensor, TensorBook2020}. However, in the past decade with the help of Einstein product and its properties, a generalization of matrix SVD and EVD has been proposed in the literature which has found applications in solving multi-linear system of equations and systems theory \cite{MLTI2,TamonTensorInversion, LuEVD}. In this section, we present some such decompositions using the Einstein product of tensors.
 
\subsection{Tensor Singular Value Decomposition (SVD)} \label{tensorSVDsection}
Tucker decomposition of a tensor can be seen as a higher order SVD \cite{LathauwerSVD} and has found many applications particularly in extracting low rank structures in higher dimensional data \cite{zhang2018tensor}. A more specific version of tensor SVD is explored in \cite{TamonTensorInversion} as a tool for finding tensor inversion and solving multi-linear systems. Note that \cite{TamonTensorInversion} presents SVD for square tensors only. The idea of SVD from \cite{TamonTensorInversion} is further generalized for any even order tensor in \cite{SunMoore}. However, it can be further extended for any arbitrary order and size of tensor. We present a tensor SVD theorem here for any tensor of order $N+M$.
\begin{theorem}
For a tensor, $\mathscr{A} \in \mathbb{C}^{I_1 \times \dots \times I_N \times J_1 \times \dots \times J_M }$, the SVD of $\mathscr{A}$ has the form :
\begin{equation}\label{SVDTensor}
\mathscr{A} = \mathscr{U}*_N \mathscr{D} *_M \mathscr{V}^H
\end{equation}
where $\mathscr{U} \in \mathbb{C}^{I_1 \times \dots \times I_N \times I_1 \times \dots \times I_N}$ and $\mathscr{V} \in \mathbb{C}^{J_1 \times \dots \times J_M \times J_1 \times \dots \times J_M}$ are unitary tensors and $\mathscr{D} \in \mathbb{C}^{I_1 \times \dots \times I_N \times J_1 \times \dots \times J_M}$ is a pseudo-diagonal tensor whose non-zero values are the singular values of $\mathscr{A}$.  
\end{theorem}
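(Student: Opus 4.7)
The plan is to reduce the tensor SVD to the matrix SVD via the matricization isomorphism $f$ defined in \eqref{Transform}, and then lift the result back to the tensor setting using the inverse transformation and Lemma \ref{TransformPropertyLemma}. Concretely, I would first set $A = f_{I_1,\dots,I_N|J_1,\dots,J_M}(\mathscr{A}) \in \mathbb{C}^{(I_1\cdots I_N)\times(J_1\cdots J_M)}$, then invoke the standard matrix SVD to write $A = U\,D\,V^H$ with $U$ and $V$ unitary matrices (of sizes $I_1\cdots I_N$ and $J_1\cdots J_M$ respectively) and $D$ a rectangular diagonal matrix carrying the singular values of $A$.

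Next I would define the candidate tensor factors by pulling back through the appropriate inverse matricizations: $\mathscr{U} = f^{-1}_{I_1,\dots,I_N|I_1,\dots,I_N}(U)$, $\mathscr{V} = f^{-1}_{J_1,\dots,J_M|J_1,\dots,J_M}(V)$, and $\mathscr{D} = f^{-1}_{I_1,\dots,I_N|J_1,\dots,J_M}(D)$. By the definition of a pseudo-diagonal tensor given earlier in the paper, $\mathscr{D}$ is automatically pseudo-diagonal because its matricization is diagonal. To recover the decomposition \eqref{SVDTensor}, I apply Lemma \ref{TransformPropertyLemma} twice:
\begin{equation*}
f_{I_1,\dots,I_N|J_1,\dots,J_M}(\mathscr{U}*_N\mathscr{D}*_M\mathscr{V}^H) = f(\mathscr{U})\cdot f(\mathscr{D})\cdot f(\mathscr{V}^H) = U\,D\,V^H = A,
\end{equation*}
and then use bijectivity of $f$ to conclude $\mathscr{A} = \mathscr{U}*_N\mathscr{D}*_M\mathscr{V}^H$.

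The step that needs the most care (and the main obstacle) is verifying two compatibility facts linking the tensor Hermitian (for the distinguished permutation $\sigma = [N{+}1,\dots,N{+}M,1,\dots,N]$ used throughout the paper) with the matrix Hermitian under $f$, and then using these to transfer the matrix unitarity of $U,V$ to tensor unitarity of $\mathscr{U},\mathscr{V}$. Specifically, I would establish the identity $f_{J_1,\dots,J_M|I_1,\dots,I_N}(\mathscr{W}^H) = \bigl(f_{I_1,\dots,I_N|J_1,\dots,J_M}(\mathscr{W})\bigr)^H$ by tracing the index bijection in \eqref{Transform}: under this Hermitian the row multi-index $(i_1,\dots,i_N)$ is swapped with the column multi-index $(j_1,\dots,j_M)$ while the scalar entries are conjugated, which is exactly the matrix Hermitian on the transformed matrix. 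This identity also exchanges the two matricization partitions in a controlled way, which is the only subtle point.

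With this compatibility in hand, the unitarity check is immediate: from $U^H U = U U^H = I$ and Lemma \ref{TransformPropertyLemma} one gets $f(\mathscr{U}^H *_N \mathscr{U}) = U^H U = I = f(\mathscr{I}_N)$ and similarly for $\mathscr{U}*_N\mathscr{U}^H$, and bijectivity of $f$ yields $\mathscr{U}^H*_N\mathscr{U} = \mathscr{U}*_N\mathscr{U}^H = \mathscr{I}_N$; the same argument works for $\mathscr{V}$. Finally, I would remark that the singular values of $\mathscr{A}$ are precisely the singular values of $A$, placed on the pseudo-diagonal of $\mathscr{D}$, giving both existence of the decomposition and a concrete computational recipe: matricize, apply matrix SVD, and unfold back.
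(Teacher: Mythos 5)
Your proposal follows essentially the same route as the paper's proof: matricize $\mathscr{A}$ via $f_{I_1,\dots,I_N|J_1,\dots,J_M}$, apply the matrix SVD, and lift the factors back through the inverse transformation using Lemma \ref{TransformPropertyLemma}. The only difference is that you explicitly verify the compatibility of the tensor Hermitian with the matrix Hermitian under $f$ and the resulting unitarity of $\mathscr{U}$ and $\mathscr{V}$, details the paper's proof leaves implicit; your argument is correct.
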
 
\begin{proof}
For tensors $\mathscr{A} \in \mathbb{C}^{I_1 \times \dots \times I_N \times J_1 \times \dots \times J_M}$ and $\mathscr{B} \in \mathbb{C}^{J_1 \times \dots \times J_M\times K_1 \times \dots \times K_P}$, from (\ref{TranformProperty}), we get :
\begin{equation}\label{invofA1}
\mathscr{A}*_M \mathscr{B}=f_{I_1,\dots,I_N|K_1,\dots,K_P}^{-1}[f_{I_1,\dots,I_N|J_1,\dots,J_M}(\mathscr{A})\cdot f_{J_1,\dots,J_M|K_1,\dots,K_P}(\mathscr{B})] 
\end{equation}
If $\text{A} \in \mathbb{C}^{I_1 I_2 \cdots I_N \times J_1 J_2 \cdots J_M}$ and $\text{B} \in \mathbb{C}^{J_1 J_2 \cdots J_M \times K_1 K_2 \cdots K_P}$ are transformed matrices from $\mathscr{A}$ and $\mathscr{B}$ respectively, then substituting $f_{I_1,\dots,I_N|J_1,\dots,J_M}(\mathscr{A})=\text{A}$ and $f_{J_1,\dots,J_M|K_1,\dots,K_P}(\mathscr{B}) = \text{B}$ in (\ref{invofA1}) gives us 
\begin{equation}\label{invfA}
f_{I_1,\dots,I_N|K_1,\dots,K_P}^{-1}(\text{A}\cdot \text{B})=\mathscr{A}*_M \mathscr{B} =f_{I_1,\dots,I_N|J_1,\dots,J_M}^{-1}(\text{A}) *_M f_{J_1,\dots,J_M|K_1,\dots,K_P}^{-1}(\text{B})
\end{equation}
Hence if $\text{A}=\text{U} \cdot \text{D} \cdot \text{V}^H$ (obtained from matrix SVD), then based on (\ref{invfA}), for an order $N+M$ tensor $\mathscr{A} \in \mathbb{C}^{I_1 \times \dots \times I_N \times J_1 \times \dots \times J_M}$, we have :
\begin{align}
\mathscr{A}&=f_{I_1,\dots,I_N|J_1,\dots,J_M}^{-1}(\text{A}) = f_{I_1,\dots,I_N|J_1,\dots,J_M}^{-1}(\text{U} \cdot \text{D} \cdot \text{V}^H) \nonumber \\  
&=f_{I_1,\dots,I_N|I_1,\dots,I_N}^{-1}(\text{U}) *_N f_{I_1,\dots,I_N|J_1,\dots,J_M}^{-1}(\text{D}) *_M f_{J_1,\dots,J_M|J_1,\dots,J_M}^{-1}(\text{V}^H) = \mathscr{U}*_N \mathscr{D} *_M \mathscr{V}^H   
\end{align}

\end{proof}
Note that for a tensor of order $N+M$, we will get different SVDs for different values of $N$ and $M$, but for a given $N$ and $M$, the SVD is unique which depends on the matrix SVD of $f_{I_1,\dots,I_N|J_1,\dots,J_M}(\mathscr{A})=\text{A}$. A proof of this theorem for $2N$ order tensors with $N=M$ using transformation defined in \eqref{Transform} is provided in \cite{TamonTensorInversion}. 
This SVD can be seen as a specific case of Tucker decomposition by expressing the unitary tensors in terms of the factor matrices obtained through Tucker decomposition. Let's consider an example of a fourth-order tensor. For a tensor $\mathscr{A} \in \mathbb{C}^{I_1 \times I_2 \times K_1 \times K_2}$, the Tucker decomposition has the form:
\begin{equation}\label{tuckerdecomp}
\mathscr{A} = \mathscr{D} {\times}_{1} \text{B}^{(1)} {\times}_{2} \text{B}^{(2)} {\times}_{3} \text{B}^{(3)} {\times}_{4} \text{B}^{(4)} 
\end{equation}
where $\text{B}^{(i)}$ are factor matrices along all four modes of the tensor and $\times_n$ denotes the $n$-mode product. Now, \eqref{tuckerdecomp} can be written in matrix form as follows \cite{LuEVD}:
\begin{equation}
\text{A}=\underbrace{(\text{B}^{(1)} \otimes \text{B}^{(2)})}_{\text{U}}\cdot \text{D}\cdot \underbrace{(\text{B}^{(3)T} \otimes \text{B}^{(4)T})}_{\text{V}^H}.
\end{equation}
Now using the transformation from \eqref{Transform}, we can map the elements of matrix $\text{U}$ to a tensor $\mathscr{U}$ as:
\begin{equation}
\mathscr{U}_{i,j,k,l}=\text{U}_{i+(j-1)I_1,k+(l-1)K_1}.
\end{equation}
Also since $\text{U}=(\text{B}^{(1)} \otimes \text{B}^{(2)})$ and Kronecker product when written element-wise, can be expressed as \cite{Cichocki}:
\begin{equation}
\begin{aligned}
\text{U}_{i+(j-1)I_1,k+(l-1)K_1} = \text{B}^{(1)}_{j,l} \cdot \text{B}^{(2)}_{i,k} \\
\Rightarrow \mathscr{U}_{i,j,k,l} = \text{B}^{(1)}_{j,l} \cdot \text{B}^{(2)}_{i,k}.
\end{aligned}
\end{equation}
This relation can be seen as the unitary tensor $\mathscr{U}$ being the outer product of matrices $\text{B}^{(1)}$ and $\text{B}^{(2)}$ \cite{TamonTensorInversion}, but with different mode permutation. Similar relation can be established for $\mathscr{V}$ in terms of $\text{B}^{(3)}$ and $\text{B}^{(4)}$.

\subsection{Tensor Eigen Value Decomposition (EVD)}
As a generalization of matrix eigenvalues to tensors, several definitions exist in literature for tensor eigenvalues \cite{QiLEigen}. But most of these definitions applies to super-symmetric tensors which are defined as a class of tensors that are invariant under any permutation of their indices \cite{QiLBook}. Such an approach has applications in Physics and Mechanics \cite{QiLBook}. But there is no single generalization to tensor case that preserves all the properties of matrix eigenvalues  \cite{limsingular}. Here we present a particular generalization from \cite{LuEVD, MLTI2, TensorDet} which can be seen as the extension of matrix spectral decomposition theorem and has applications in multi-linear system theory. 

\begin{definition}
Let $\mathscr{A} \in \mathbb{C}^{I_1 \times \dots \times I_N \times I_1 \times \dots \times I_N }, \mathscr{X} \in \mathbb{C}^{I_1 \times \dots \times I_N }$, $\lambda \in \mathbb{C}$, where $\mathscr{X}$ and $\lambda$ satisfy $\mathscr{A}*_N \mathscr{X} =\lambda \mathscr{X}$, then we call $\mathscr{X}$ and $\lambda$ as \textit{eigentensor} and \textit{eigenvalue} of $\mathscr{A}$ respectively \cite{LuEVD}.
\end{definition} 

\begin{lemma}\label{RealEigenVal}
Eigenvalues $\lambda$ of a Hermitian tensor $\mathscr{A} \in \mathbb{C}^{I_1 \times \dots \times I_N \times I_1 \times \dots \times I_N }$ are real, i.e $\lambda \in \mathbb{R}$.
\end{lemma}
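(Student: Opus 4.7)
The plan is to adapt the classical matrix argument — that Hermitian matrices have real eigenvalues — to the Einstein-product setting, leveraging the identity $(\mathscr{A} *_N \mathscr{B})^H = \mathscr{B}^H *_N \mathscr{A}^H$ already established in the paper, together with the positivity of $\mathscr{X}^H *_N \mathscr{X} = \|\mathscr{X}\|_2^2 > 0$ whenever $\mathscr{X}$ is a nonzero eigentensor.

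Concretely, I would start from the eigenvalue relation $\mathscr{A} *_N \mathscr{X} = \lambda \mathscr{X}$ and contract on the left with $\mathscr{X}^H$ over all $N$ modes. Using the associativity property \eqref{associative} this yields the scalar identity
\begin{equation*}
\mathscr{X}^H *_N \mathscr{A} *_N \mathscr{X} \;=\; \lambda \,(\mathscr{X}^H *_N \mathscr{X}) \;=\; \lambda \,\|\mathscr{X}\|_2^2 .
\end{equation*}
Next I would take the complex conjugate of the left-hand side, viewed as a scalar. Since Hermitian conjugation of a scalar coincides with complex conjugation, and applying $(\mathscr{P} *_N \mathscr{Q})^H = \mathscr{Q}^H *_N \mathscr{P}^H$ twice gives
\begin{equation*}
(\mathscr{X}^H *_N \mathscr{A} *_N \mathscr{X})^H \;=\; \mathscr{X}^H *_N \mathscr{A}^H *_N \mathscr{X} \;=\; \mathscr{X}^H *_N \mathscr{A} *_N \mathscr{X},
\end{equation*}
where the last equality uses $\mathscr{A}^H = \mathscr{A}$. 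Hence $\mathscr{X}^H *_N \mathscr{A} *_N \mathscr{X} \in \mathbb{R}$, and dividing by the strictly positive real number $\|\mathscr{X}\|_2^2$ forces $\lambda \in \mathbb{R}$.

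The main obstacle is essentially notational: the paper defines the Hermitian only for partitioned $N+M$ tensors, so one has to interpret $\mathscr{X}^H$ for the order-$N$ eigentensor carefully (as the elementwise conjugate) so that $\mathscr{X}^H *_N \mathscr{A}$ and $\mathscr{X}^H *_N \mathscr{X}$ are well-defined under the Einstein product. A cleaner alternative that sidesteps this subtlety is to invoke Lemma \ref{TransformPropertyLemma}: matricizing via $f_{I_1,\dots,I_N\mid I_1,\dots,I_N}$ turns $\mathscr{A}$ into a matrix $\text{A}$ and $\mathscr{X}$ into a column vector $\underline{\text{x}}$, so the tensor eigenvalue equation becomes $\text{A}\underline{\text{x}} = \lambda \underline{\text{x}}$. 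A direct index check shows that $f$ intertwines the Hermitian operations, so $\mathscr{A}^H = \mathscr{A}$ implies $\text{A}^H = \text{A}$, and the statement reduces to the classical matrix result. I would present the direct Einstein-product proof in the main text, as it is more in the spirit of the tutorial, and leave the matricization route as a remark.
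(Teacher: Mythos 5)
Your argument is correct and takes essentially the same route as the paper's own proof: both reduce to the classical quadratic-form computation, showing that $\mathscr{X}^* *_N \mathscr{A} *_N \mathscr{X}$ equals both $\lambda\,(\mathscr{X}^* *_N \mathscr{X})$ and $\lambda^*\,(\mathscr{X}^* *_N \mathscr{X})$ and then dividing by the strictly positive $\mathscr{X}^* *_N \mathscr{X}=\Vert\mathscr{X}\Vert_2^2$; the only difference is that the paper obtains the conjugated relation via the commutativity identity \eqref{commutative} while you invoke $(\mathscr{A}*_N\mathscr{B})^H=\mathscr{B}^H *_N \mathscr{A}^H$, which is an immaterial variation. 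Your handling of $\mathscr{X}^H$ as the elementwise conjugate of the order-$N$ eigentensor matches what the paper implicitly does by writing $\mathscr{X}^*$.
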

\begin{proof}
Consider ${\mathscr{X}} \in \mathbb{C}^{I_1 \times\dots \times I_N}$ and $\lambda \in \mathbb{C}$. Now let's assume ${\mathscr{A}}$ is hermitian. i.e. ${\mathscr{A}}^H = {\mathscr{A}}$
\begin{equation*}
\begin{aligned}
{\mathscr{A}}*_N {\mathscr{X}} &=\lambda {\mathscr{X}}\\
{\mathscr{X}} *_N {\mathscr{A}}^T  &=\lambda {\mathscr{X}} \,\,\,\,\,\,\,\,\text{(from (\ref{commutative}))}\\
({\mathscr{X}} *_N {\mathscr{A}}^T)^*  &=(\lambda {\mathscr{X}})^*\\
{\mathscr{X}}^* *_N {\mathscr{A}}^H  &=\lambda^* {\mathscr{X}}^*\\
{\mathscr{X}}^* *_N {\mathscr{A}} &=\lambda^* {\mathscr{X}}^* \,\,\,\,\,\,\,\,\text{(since } {\mathscr{A}}={\mathscr{A}}^H)\\
{\mathscr{X}}^* *_N {\mathscr{A}} *_N {\mathscr{X}} &=\lambda^* {\mathscr{X}}^* *_N {\mathscr{X}}\\
{\mathscr{X}}^* *_N (\lambda {\mathscr{X}}) &=\lambda^* {\mathscr{X}}^* *_N {\mathscr{X}} \,\,\,\,\,\,\,\,\text{(as } {\mathscr{A}}*_N {\mathscr{X}}=\lambda {\mathscr{X}})\\
\lambda {\mathscr{X}}^* *_N {\mathscr{X}} &=\lambda^* {\mathscr{X}}^* *_N {\mathscr{X}} \,\,\,\,\,\,\,\,\text{(as }\lambda \text{ is a scalar} )\\
\lambda &= \lambda^* \Rightarrow \lambda \in \mathbb{R} \,\,\,\,\,\,\,\,\text{(since } {\mathscr{X}} \text{ is non-zero})
\end{aligned}
\end{equation*}
\end{proof}

\begin{theorem}
The EVD of a Hermitian tensor $\mathscr{A} \in \mathbb{C}^{I_1 \times \dots \times I_N \times I_1 \times \dots \times I_N }$ is given as \textup{\cite{TamonTensorInversion}} :
\begin{equation}\label{EVDTensor}
\mathscr{A} = \mathscr{U}*_N \mathscr{D} *_N \mathscr{U}^H
\end{equation}
where $\mathscr{U} \in \mathbb{C}^{I_1 \times \dots \times I_N \times I_1 \times \dots \times I_N}$ is a unitary tensor and $\mathscr{D} \in \mathbb{C}^{I_1 \times \dots \times I_N \times I_1 \times \dots \times I_N}$ is a square pseudo-diagonal tensor, i.e. $\mathscr{D}_{i_1,\dots ,i_N, j_1, \dots ,j_N} = 0$ if $(i_1,\dots,i_N)\neq (j_1,\dots,j_N)$ with its non-zero values being the eigenvalues of $\mathscr{A}$ and $\mathscr{U}$ containing the eigentensors of $\mathscr{A}$.  
\end{theorem}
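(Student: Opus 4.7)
The plan is to mirror the argument used for the tensor SVD: move to the matrix world via the bijection $f = f_{I_1,\dots,I_N | I_1,\dots,I_N}$, invoke the classical spectral theorem for Hermitian matrices, and pull the result back through $f^{-1}$ using Lemma \ref{TransformPropertyLemma}. Let $\text{A} = f(\mathscr{A}) \in \mathbb{C}^{I_1 \cdots I_N \times I_1 \cdots I_N}$, which is a square matrix by construction.

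First I would verify that $\mathscr{A}^H = \mathscr{A}$ implies $\text{A}^H = \text{A}$. Since the Hermitian operation on $\mathscr{A}$ swaps the first $N$ modes with the last $N$ modes and conjugates, and since the index pairing prescribed by (\ref{Transform}) uses the same linearization on both sides of the partition, the entry $\text{A}_{p,q}$ corresponds to $\mathscr{A}_{i_1,\dots,i_N,j_1,\dots,j_N}$ with $p = i_1 + \sum_{k \geq 2}(i_k-1)\prod_{l<k} I_l$ and $q$ the analogous index in the $j$'s; swapping the two blocks of modes and conjugating therefore produces exactly $(\text{A}^H)_{p,q}$. Hence $\text{A}$ is Hermitian.

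Next I apply the matrix spectral theorem to $\text{A}$ to write $\text{A} = \text{U} \cdot \text{D} \cdot \text{U}^H$ with $\text{U}$ unitary and $\text{D}$ real diagonal. Setting $\mathscr{U} = f^{-1}(\text{U})$ and $\mathscr{D} = f^{-1}(\text{D})$ and applying (\ref{invfA}) from the SVD proof (which in turn is Lemma \ref{TransformPropertyLemma}) yields
\begin{equation*}
\mathscr{A} = f^{-1}(\text{U} \cdot \text{D} \cdot \text{U}^H) = \mathscr{U} *_N \mathscr{D} *_N \mathscr{U}^H.
\end{equation*}
Unitarity of $\mathscr{U}$ follows from $f(\mathscr{U}^H) = \text{U}^H$ combined with Lemma \ref{TransformPropertyLemma}: $f(\mathscr{U}^H *_N \mathscr{U}) = \text{U}^H \text{U} = \text{I}$, and $f^{-1}(\text{I}) = \mathscr{I}_N$ because the matrix identity is diagonal with ones, which maps to the Kronecker-delta product defining $\mathscr{I}_N$. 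Pseudo-diagonality of $\mathscr{D}$ is immediate from the definition, since $f(\mathscr{D}) = \text{D}$ is a diagonal matrix.

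The last step, and the one that requires a little care, is to identify the non-zero entries of $\mathscr{D}$ as \emph{eigenvalues} in the tensor sense of the preceding definition, i.e.\ to show they satisfy $\mathscr{A} *_N \mathscr{X} = \lambda \mathscr{X}$ for appropriate eigentensors $\mathscr{X}$. For each column $\underline{\text{u}}_r$ of $\text{U}$ we have $\text{A} \cdot \underline{\text{u}}_r = \lambda_r \underline{\text{u}}_r$; defining $\mathscr{X}^{(r)} \in \mathbb{C}^{I_1 \times \cdots \times I_N}$ via the inverse vectorization (the case $J_1 = \cdots = J_M = 1$ of $f^{-1}$ noted after (\ref{Transform})) and using Lemma \ref{TransformPropertyLemma} with $P = 0$, this matrix equation transforms to $\mathscr{A} *_N \mathscr{X}^{(r)} = \lambda_r \mathscr{X}^{(r)}$, so the $\lambda_r$ are genuine tensor eigenvalues and the appropriately reshaped columns of $\mathscr{U}$ are the corresponding eigentensors. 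The main obstacle throughout is the bookkeeping of the index linearization, in particular checking that the Hermitian symmetry and the eigenvector relations survive the passage between $\mathscr{A}$ and $\text{A}$; once that correspondence is established, every remaining claim is a direct translation of the matrix spectral theorem through the isomorphism $f$.
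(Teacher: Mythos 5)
Your proposal is correct and follows exactly the route the paper intends: the paper itself gives no detailed proof, stating only that the theorem ``can be proven using Lemma \ref{TransformPropertyLemma}'' with details deferred to \cite{TamonTensorInversion,TensorDet}, and your argument is precisely that --- matricize via $f$, apply the matrix spectral theorem, and pull back through the isomorphism, mirroring the paper's own SVD proof. Your additional checks (that $f$ intertwines the Hermitian operations and that the eigenvector relation survives inverse vectorization) are correct and in fact supply the bookkeeping the paper omits.
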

This theorem can be proven using Lemma \ref{TransformPropertyLemma}, details are provided in \cite{TamonTensorInversion,TensorDet}. The eigenvalues of $\mathscr{A}$ are same as the eigenvalues of $f_{I_1,\dots,I_N|I_1,\dots,I_N}(\mathscr{A})$ \cite{Bernstein}. We will refer to a tensor  $\mathscr{A} \in \mathbb{C}^{I_1\times \dots \times I_N \times I_1\times \dots \times I_N}$ as \textit{positive semi-definite}, denoted by $\mathscr{A} \succeq 0$ if all its eigenvalues are non-negative, which is same as $f_{I_1,\dots,I_N|I_1,\dots,I_N}(\mathscr{A})$ being a positive semi-definite matrix. A tensor is \textit{positive definite}, $\mathscr{A} \succ 0$, if all its eigenvalues are strictly greater than zero. A positive semi-definite pseudo-diagonal tensor $\mathscr{D}$, will have all its components non-negative. Its square root can be denoted as $\mathscr{D}^{1/2}$ which is also pseudo-diagonal positive semi-definite whose elements are the square root of elements of $\mathscr{D}$ such that $\mathscr{D}^{1/2}*_N\mathscr{D}^{1/2}=\mathscr{D}$. Similarly, if $\mathscr{D}$ is positive definite, its inverse can be denoted as $\mathscr{D}^{-1}$ which is also pseudo-diagonal whose non zero elements are the reciprocal of the corresponding elements of $\mathscr{D}$. Based on tensor EVD, we can also write the square root of any Hermitian positive semi-definite tensor as $\mathscr{A}^{1/2} = \mathscr{U}*_N \mathscr{D}^{1/2} *_N \mathscr{U}^H$ and inverse of any Hermitian positive definite tensor as $\mathscr{A}^{-1} = \mathscr{U}*_N \mathscr{D}^{-1} *_N \mathscr{U}^H$.  
It is straightforward to see that the singular values of a tensor $\mathscr{A} \in \mathbb{C}^{I_1\times \dots \times I_N \times J_1 \times \dots \times J_M}$ are the square root of the eigenvalues of tensor $\mathscr{A}^H *_N \mathscr{A}$. From SVD, if $\mathscr{A} = \mathscr{U}*_N \mathscr{D} *_M \mathscr{V}^H
$ , then 
\begin{equation}
\mathscr{A}^H *_N \mathscr{A} = (\mathscr{V}*_M \mathscr{D}^H *_N \mathscr{U}^H) *_N (\mathscr{U}*_N \mathscr{D} *_M \mathscr{V}^H) = \mathscr{V}*_M \mathscr{D}^H *_N \mathscr{D} *_M \mathscr{V}^H 
\end{equation}
where $\mathscr{D}^H *_N \mathscr{D}$ is the pseudo-diagonal tensor with eigenvalues of $\mathscr{A}^H *_N \mathscr{A}$ on its pseudo-diagonal which are square of the singular values obtained from the SVD of $\mathscr{A}$.

\begin{definition}{\textbf{Trace}}:
The trace of a tensor $\mathscr{A} \in \mathbb{C}^{I_1\times \dots \times I_N \times I_1\times \dots \times I_N}$ is defined as the sum of its pseudo-diagonal entries :
\begin{equation}
\tr(\mathscr{A})=\sum_{i_1,\dots,i_N}\mathscr{A}_{i_1,i_2,\dots,i_N,i_1,i_2,\dots,i_N}
\end{equation}
\end{definition}

\begin{definition}{\textbf{Determinant}}:
The determinant of a tensor $\mathscr{A} \in \mathbb{C}^{I_1\times \dots \times I_N \times I_1\times \dots \times I_N}$ is defined as the product of its eigenvalues i.e., if $\mathscr{A} = \mathscr{U}*_N \mathscr{D} *_N \mathscr{U}^H$, then
\begin{equation}
\det(\mathscr{A})=\prod_{i_1,\dots,i_N}\mathscr{D}_{i_1,i_2,\dots,i_N,i_1,i_2,\dots,i_N}
\end{equation}
The eigenvalues of $\mathscr{A}$ are the same as that of its matrix transformation, hence $\det(\mathscr{A})=\det(f_{I_1,\dots,I_N|I_1,\dots,I_N}(\mathscr{A}))$. Note that there exists other definitions in literature for determinant based on how one chooses to define the eigenvalues of tensors \cite{QiLDet}. The definition presented here is the same as the unfolding determinant in \cite{TensorDet}.  
\end{definition} 

\subsubsection*{Some properties of trace and determinant}
The following properties can be easily shown by writing the tensors component wise or using Lemma \ref{TransformPropertyLemma}.
\begin{enumerate}
\item For two tensors $\mathscr{A} \in \mathbb{C}^{I_1 \times \dots \times I_N}$ and $\mathscr{B} \in \mathbb{C}^{I_1 \times \dots \times I_N}$of same size and order $N$, 
\begin{equation}\label{Traceprop}
\mathscr{A}*_N \mathscr{B}= \mathscr{B}*_N \mathscr{A} = \tr(\mathscr{A}\circ \mathscr{B}) = \tr(\mathscr{B}\circ \mathscr{A})
\end{equation}
\item For tensors $\mathscr{A} \in \mathbb{C}^{I_1 \times \dots \times I_N \times J_1 \times \dots \times J_M }$ and $\mathscr{B} \in \mathbb{C}^{J_1 \times \dots \times J_M \times I_1 \times \dots \times I_N }$, we have :
\begin{equation}\label{traceprope}
\tr(\mathscr{A}*_M \mathscr{B}) = \tr(\mathscr{B}*_N \mathscr{A})
\end{equation}
\begin{equation}\label{Sylvester}
\det(\mathscr{I}_N + \mathscr{A}*_M \mathscr{B} ) = \det(\mathscr{I}_M + \mathscr{B}*_N \mathscr{A} ) 
\end{equation}
where $\mathscr{I}_N$ and $\mathscr{I}_M$ are identity tensors of order $2N$ and $2M$ respectively. To prove (\ref{Sylvester}), we can use lemma \ref{TransformPropertyLemma} and Sylvester's matrix determinant identity \cite{MatrixCookbook}.
\item For tensors $\mathscr{A},\mathscr{B} \in \mathbb{C}^{I_1 \times \dots \times I_N \times I_1 \times \dots \times I_N }$, we have :
\begin{equation}\label{eq18}
\det(\mathscr{A}*_N\mathscr{B})=\det(\mathscr{B}*_N\mathscr{A})=\det(\mathscr{A})\cdot \det(\mathscr{B})
\end{equation}
\item Trace of a positive semi-definite tensor is the sum of its eigenvalues.
\item The absolute value of the determinant of a unitary tensor is 1 and the determinant of a square pseudo-diagonal tensor is the product of its pseudo-diagonal entries.
\end{enumerate}

\subsection{Tensor LU decomposition}
LU decomposition is a powerful tool in Linear Algebra which can be used for solving system of equations. In order to solve systems of multi-linear equations, \cite{TensorDet} proposed an LU decomposition form for tensors. For $\mathscr{A} \in \mathbb{C}^{I_1 \times \dots \times I_N \times I_1 \times \dots \times I_N}$, the LU factorization takes the form :
\begin{equation}
\mathscr{A} = \mathscr{L} *_N \mathscr{U}
\end{equation}
where $\mathscr{L},\mathscr{U} \in \mathbb{C}^{I_1 \times \dots \times I_N \times I_1 \times \dots \times I_N}$ are pseudo-lower and pseudo-upper triangular tensors respectively. In order to solve a system of multi-linear equation $\mathscr{A}*_N \mathscr{X} = \mathscr{B}$ to find $\mathscr{X}$, LU decomposition of $\mathscr{A}$ can be used to break the equation into two pseudo-triangular equations  $\mathscr{U}*_N \mathscr{X} = \mathscr{Y}$ and $\mathscr{L}*_N \mathscr{Y} = \mathscr{B}$. These two equations can be solved using forward and backward substitution algorithms proposed in \cite{TensorDet}. When $\mathscr{B}$ is an identity tensor, this method can also be used for finding the inverse of a tensor. More details on computing LU decomposition and required conditions for its existence can be found in \cite{TensorDet}.

Note that all these tensor decompositions represent a given tensor in terms of a contracted product between factor tensors. Hence they all can be represented using Tensor Network diagrams. For example, we show the TN diagram corresponding to tensor SVD in Figure \ref{SVD_TN_Fig}. A detailed TN representation of several other tensor decompositions such as Tucker, PARAFAC, Tensor Train Decomposition, is also presented in \cite{cichocki2014era}.

\begin{figure}
\center
\includegraphics[scale=0.5]{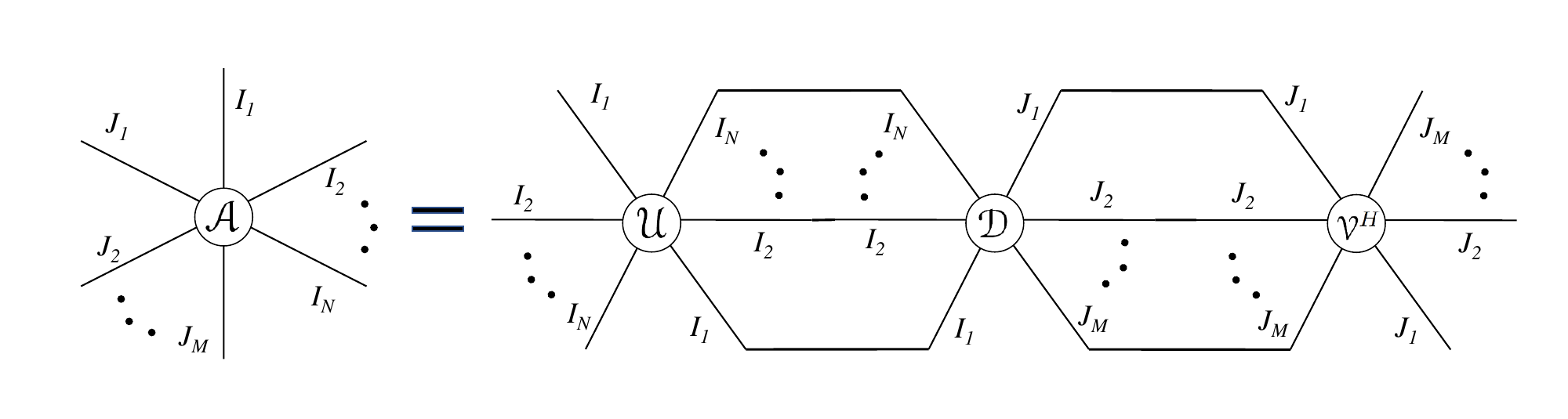}
\caption{TN representation of Tensor SVD from \eqref{SVDTensor} \label{SVD_TN_Fig}}
\end{figure}

\section{Multi-linear Tensor Systems}\label{Sec5}
Using the tools presented so far, we will now present the notions of multi-linear system theory using tensors.  
\subsection{Discrete time multi-linear tensor systems}
A discrete time multi-linear tensor system is characterized by an order $N+M$ system tensor $\mathscr{H}[k] \in \mathbb{C}_k^{J_1 \times \dots \times J_M \times I_1 \times \dots \times I_N}$ which produces an order $M$ output tensor sequence $\mathscr{Y}[k] \in \mathbb{C}^{J_1 \times \dots \times J_M}_k$ from an input tensor sequence  $\mathscr{X}[k] \in \mathbb{C}^{I_1 \times \dots \times I_N}_k$ through a discrete contracted convolution as defined in \eqref{dfeeq3}. The system tensor can be seen as an impulse response tensor whose $(j_1,\dots,j_M,i_1,\dots,i_N)$th entry is the impulse response from the $(i_1,\dots,i_N)$th input to the $(j_1,\dots,j_M)$th output. 

A system tensor is considered $p-$stable if corresponding to every input of finite $p-$norm, the system produces an output which is also finite $p-$norm. When $p \rightarrow \infty$, this notion is known as Bounded Input Bounded Output (BIBO) stability. The $\infty-$norm of a signal tensor $\mathscr{X}$ is essentially its peak amplitude evaluated over all the tensor components and all times, i.e.
\begin{equation}
\Vert \mathscr{X} \Vert_{\infty} = \sup_{k} \Vert \mathscr{X}[k] \Vert_{\infty} = \sup_{k} \max_{i_1,\dots,i_N} \mid \mathscr{X}_{i_{1},i_{2},....,i_{N}}[k]\mid. 
\end{equation}
  
\begin{theorem}\label{DTThBIBO}
For a discrete multi-linear time invariant system with order $N$ input and order $M$ output with an order $N+M$ impulse response system tensor $\mathscr{H}[k] \in \mathbb{C}^{J_1 \times \dots \times J_M \times I_1 \times \dots \times I_N}_k$, is BIBO stable if and only if 
\begin{equation}\label{BIBODT}
\max_{j_1,\dots,j_M} \sum_{i_1,\dots,i_N} \sum_k \mid \mathscr{H}_{j_1,\dots,j_M,i_1,\dots,i_N}[k] \mid < \infty. 
\end{equation} 
\begin{proof}
If the input signal tensor $\mathscr{X}[k]$ satisfies $\Vert \mathscr{X} \Vert_{\infty} < \infty$, then output is given via discrete contracted convolution as:
\begin{equation}
\mathscr{Y}[k] = \sum_{l} \mathscr{H}[k-l] *_N \mathscr{X}[l] 
\end{equation}
and thus,
\begin{align}
\max_{j_1,\dots,j_M} \mid \mathscr{Y}_{j_1,\dots,j_M}[k] \mid &= \max_{j_1,\dots,j_M} \mid \sum_{l} \sum_{i_1,\dots,i_N} \mathscr{H}_{j_1,\dots,j_M,i_1,\dots,i_N}[k-l]  \mathscr{X}_{i_1,\dots,i_N}[l]  \mid 
\\
&\leq \Big( \max_{j_1,\dots,j_M} \sum_{l} \sum_{i_1,\dots,i_N} \mid \mathscr{H}_{j_1,\dots,j_M,i_1,\dots,i_N}[k-l] \mid \Big) \max_{i_1,\dots,i_N} \sup_l \mid \mathscr{X}_{i_1,\dots,i_N}[l] \mid
\\
&= \Big( \max_{j_1,\dots,j_M} \sum_{l} \sum_{i_1,\dots,i_N} \mid \mathscr{H}_{j_1,\dots,j_M,i_1,\dots,i_N}[k-l] \mid \Big) \Vert \mathscr{X} \Vert_{\infty}.
\end{align}
Hence we get,
\begin{align}
\Vert \mathscr{Y} \Vert_{\infty} &= \sup_{k} \Vert \mathscr{Y}[k] \Vert_{\infty} = \sup_{k} \max_{j_1,\dots,j_M} \mid \mathscr{Y}_{j_{1},....,j_{M}}[k]\mid 
\\
& \leq \Big( \max_{j_1,\dots,j_M} \sum_{i_1,\dots,i_N} \sum_k \mid \mathscr{H}_{j_1,\dots,j_M,i_1,\dots,i_N}[k] \mid \Big) \Vert \mathscr{X} \Vert_{\infty} < \infty.  
\end{align}
which proves that output is bounded if \eqref{BIBODT} is satisfied.
To prove of the converse of the theorem, it suffices to show any example where if \eqref{BIBODT} is not satisfied, there exists a bounded input which leads to an unbounded output. For this, we can simply consider the case where input and output are scalars which is a special case of the tensor formulation. Equation \eqref{BIBODT} in that case translates to BIBO condition for SISO LTI system, i.e. $\sum_k |h[k]| < \infty $. Hence it can be readily verified that a signum input defined as $x[n]=sgn(h[-n])$ which is bounded will lead to an unbounded output if the impulse response sequence is not absolutely summable \cite{mitra2006digital}.

\end{proof}
\end{theorem}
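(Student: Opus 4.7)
The plan is to prove the if-and-only-if characterization in two separate directions: sufficiency of \eqref{BIBODT} via a direct triangle inequality applied to the discrete contracted convolution, and necessity via a contrapositive argument that reduces to the classical scalar LTI BIBO counterexample.

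For the sufficiency direction, I would start from the componentwise form of the contracted convolution,
\[\mathscr{Y}_{j_1,\dots,j_M}[k] = \sum_{l}\sum_{i_1,\dots,i_N} \mathscr{H}_{j_1,\dots,j_M,i_1,\dots,i_N}[k-l]\,\mathscr{X}_{i_1,\dots,i_N}[l],\]
apply the triangle inequality to pull the absolute value inside the double sum, and factor out $\|\mathscr{X}\|_\infty$ using the pointwise bound $|\mathscr{X}_{i_1,\dots,i_N}[l]| \le \|\mathscr{X}\|_\infty$. Taking the appropriate maximum over the output indices, the supremum over $k$, and invoking translation invariance of the inner sum in $l$ then yields an estimate of the form $\|\mathscr{Y}\|_\infty \le C\,\|\mathscr{X}\|_\infty$, with $C$ equivalent (modulo the finite sizes $\prod I_n$ and $\prod J_m$) to the constant in \eqref{BIBODT}, giving boundedness of the output.

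For the necessity direction, I would argue by contrapositive. If \eqref{BIBODT} fails there exist input indices $(i_1^\ast,\dots,i_N^\ast)$ with $\sum_{j_1,\dots,j_M}\sum_k|\mathscr{H}_{j_1,\dots,j_M,i_1^\ast,\dots,i_N^\ast}[k]| = \infty$, and hence at least one output tuple $(j_1^\ast,\dots,j_M^\ast)$ for which the scalar channel $h^\ast[k] := \mathscr{H}_{j_1^\ast,\dots,j_M^\ast,i_1^\ast,\dots,i_N^\ast}[k]$ is not absolutely summable. I would then exhibit the bounded input tensor that is zero on every channel except $(i_1^\ast,\dots,i_N^\ast)$, on which its value at time $n$ is $\mathrm{sgn}(h^\ast[-n])$. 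This input satisfies $\|\mathscr{X}\|_\infty \le 1$, yet the output component $\mathscr{Y}_{j_1^\ast,\dots,j_M^\ast}[0]$ collapses to the divergent series $\sum_n |h^\ast[-n]|$, contradicting BIBO stability.

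The main obstacle is largely bookkeeping rather than analytic: one has to be careful that the max-over-input-indices combined with sum-over-output-indices in \eqref{BIBODT} matches cleanly with the max-over-output-indices that the triangle inequality naturally produces. Since all tensor mode sizes are finite, the two formulations differ only by a finite multiplicative factor, but verifying this interchange without scrambling the multi-index notation is where the proof is most susceptible to slips. The analytic content on both sides, a single use of the triangle inequality for sufficiency and the signum construction for necessity, is a direct extension of the scalar SISO case once the contracted convolution is unpacked.
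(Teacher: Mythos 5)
Your proposal is correct and follows essentially the same route as the paper: the triangle inequality applied to the componentwise contracted convolution for sufficiency, and a signum-input counterexample for necessity. If anything, your converse is executed more completely than the paper's --- you locate a specific non-summable component $\mathscr{H}_{j_1^\ast,\dots,j_M^\ast,i_1^\ast,\dots,i_N^\ast}$ and build the bounded input supported on that single input channel for the general tensor system, whereas the paper only verifies the scalar special case; and your explicit remark that the $\max_{j}\sum_{i}$ bound produced by the triangle inequality and the $\max_{i}\sum_{j}$ form of \eqref{BIBODT} agree only up to a finite multiplicative factor addresses a bookkeeping step the paper silently elides.
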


The BIBO stability condition for a MIMO LTI system requires that every element of the impulse response matrix must be absolutely summable. The condition from \eqref{BIBODT} can be seen as an extension to the tensor case, where every element of the impulse response tensor must be absolutely summable.
Furthermore, we extend the definitions of poles and zeros from matrix-based systems to tensors. A matrix transfer function $\breve{\text{H}}(z)$ has a pole at frequency $\nu$ if some entry of $\breve{\text{H}}(z)$ has a pole at $z=\nu$ \cite{PolesandZerosMIMO}. Also, $\breve{\text{H}}(z)$ has a zero at frequency $\gamma$ if the rank of $\breve{\text{H}}(z)$ drops at $z=\gamma$ \cite{PolesandZerosMIMO}. Similarly, a tensor transfer function $\breve{\mathscr{H}}(z)$ has a pole  at frequency $\nu$ if some entry of $\breve{\mathscr{H}}(z)$ has a pole at $z=\nu$. Also, $\breve{\mathscr{H}}(z)$ has a zero at frequency $\gamma$ if the rank of $f_{J_1,\dots,J_M|I_1,\dots,I_N}(\breve{\mathscr{H}}(z))$ drops at $z=\gamma$. Such a rank is also sometimes referred to as the unfolding rank of the tensor \cite{TensorDet, MLTI2}. 

A tensor system is BIBO stable if all its components are BIBO stable. This implies that every pole of every entry of its transfer function has a magnitude less than 1, i.e. all the poles lie within the unit circle on the z-plane. 
  
\subsection{Continuous time multi-linear tensor systems}

A continuous time multi-linear tensor system is characterized by an order $N+M$ system tensor $\mathscr{H}(t) \in \mathbb{C}_t^{J_1 \times \dots \times J_M \times I_1 \times \dots \times I_N}$ which produces an order $M$ output tensor signal $\mathscr{Y}(t) \in \mathbb{C}^{J_1 \times \dots \times J_M}_t$ from an input tensor signal  $\mathscr{X}(t) \in \mathbb{C}^{I_1 \times \dots \times I_N}_t$ through a contracted convolution as defined in \eqref{cfeeq2}. The system tensor can be seen as an impulse response tensor whose $(j_1,\dots,j_M,i_1,\dots,i_N)$th entry is the impulse response from the $(i_1,\dots,i_N)$th input to the $(j_1,\dots,j_M)$th output. 

Similar to the discrete case, a continuous system tensor is considered $p-$stable if corresponding to every input of finite $p-$norm, the system produces an output which is also finite $p-$norm. This notion is known as Bounded Input Bounded Output (BIBO) stability if $p=\infty$. The $\infty-$norm of a continuous signal tensor $\mathscr{X}$ is essentially its peak amplitude evaluated over all the tensor components and all times, i.e.
\begin{equation}
\Vert \mathscr{X} \Vert_{\infty} = \sup_{t} \Vert \mathscr{X}(t) \Vert_{\infty} = \sup_{t} \max_{i_1,\dots,i_N} \mid \mathscr{X}_{i_{1},i_{2},....,i_{N}}(t)\mid. 
\end{equation}
  
\begin{theorem}\label{CTThBIBO}
For a continuous multi-linear time invariant system with order $N$ input and order $M$ output with an order $M+N$ impulse response system tensor $\mathscr{H}(t) \in \mathbb{C}^{J_1 \times \dots \times J_M \times I_1 \times \dots \times I_N}_t$, is BIBO stable if and only if 
\begin{equation}\label{BIBODTC}
\max_{j_1,\dots,j_M} \sum_{i_1,\dots,i_N} \int \mid \mathscr{H}_{j_1,\dots,j_M,i_1,\dots,i_N}(t) \mid dt < \infty. 
\end{equation} 
\end{theorem}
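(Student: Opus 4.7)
The plan is to mirror the structure of the proof of Theorem \ref{DTThBIBO}, replacing sums over the discrete time index with integrals over the continuous time variable, and invoking the continuous contracted convolution definition in \eqref{cfeeq2} instead of its discrete counterpart.

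For the sufficiency direction, I would start from a bounded input signal tensor $\mathscr{X}(t)$ satisfying $\Vert \mathscr{X}\Vert_\infty < \infty$ and write the output pointwise using the Einstein-product form of the continuous contracted convolution,
\begin{equation}
\mathscr{Y}_{j_1,\dots,j_M}(t) = \int \sum_{i_1,\dots,i_N} \mathscr{H}_{j_1,\dots,j_M,i_1,\dots,i_N}(t-u)\, \mathscr{X}_{i_1,\dots,i_N}(u)\, du.
\end{equation}
Taking the absolute value, applying the triangle inequality (with $|\int| \le \int |\cdot|$), factoring out $\sup_u \max_{i_1,\dots,i_N}|\mathscr{X}_{i_1,\dots,i_N}(u)| = \Vert \mathscr{X}\Vert_\infty$ from the inner sum, and then taking $\max_{j_1,\dots,j_M}$ followed by $\sup_t$, I obtain
\begin{equation}
\Vert \mathscr{Y}\Vert_\infty \le \Bigl(\max_{i_1,\dots,i_N}\sum_{j_1,\dots,j_M}\int |\mathscr{H}_{j_1,\dots,j_M,i_1,\dots,i_N}(t)|\,dt\Bigr)\Vert \mathscr{X}\Vert_\infty,
\end{equation}
after a change of variable $t-u \to t$ inside the integral. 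The time-invariance of the Lebesgue measure makes this substitution clean, and bounding the sum over $i_1,\dots,i_N$ by $N$-fold factor of the max is the same step used in the discrete case.

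For the necessity direction, I would again reduce to the SISO case (scalar input, scalar output) by considering the special instance $N=M=0$ with all dimensions equal to one, so that \eqref{BIBODTC} collapses to $\int |h(t)|\,dt < \infty$. Here I would exhibit an explicit bounded input of modulus one, e.g.\ $x(t)=\operatorname{sgn}(h(-t))$, and observe that $y(0)=\int h(-u)\,x(u)\,du = \int |h(-u)|\,du = \int |h(u)|\,du$, which is infinite by assumption. One subtlety compared with the discrete case is measurability: the sign of a Lebesgue-measurable function is measurable, so $x(t)$ is a legitimate bounded input, and a standard limiting/approximation argument (e.g.\ truncating $x(t)$ on $[-T,T]$ and letting $T\to\infty$) can be used to make the unboundedness rigorous if one wishes to avoid evaluating the convolution exactly at $t=0$.

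I expect the main obstacle to be the converse direction: unlike the discrete case where absolute summability and pointwise evaluation are immediate, the continuous case requires care about measurability of $\operatorname{sgn}(h(-t))$ and the validity of evaluating the output at a single instant when $h$ is merely locally integrable. The forward direction is essentially a continuous analogue of the triangle-inequality bound already carried out for Theorem \ref{DTThBIBO} and should be routine modulo an application of Tonelli's theorem to justify interchanging the integral and the finite sum over $i_1,\dots,i_N$.
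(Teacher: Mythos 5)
Your proposal is correct and matches the paper's treatment: the paper gives no separate argument for Theorem~\ref{CTThBIBO}, stating only that it ``follows the same line of proof as of Theorem~\ref{DTThBIBO}'', which is exactly the sum-to-integral translation you carry out (triangle-inequality bound for sufficiency, the $\operatorname{sgn}(h(-t))$ input in the scalar special case for necessity). Your added remarks on measurability and on justifying the interchange of integral and finite sum go slightly beyond what the paper records, but they refine rather than alter the approach.
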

The condition from \eqref{BIBODTC} implies that every element of the impulse response tensor must be absolutely integrable. The proof of Theorem \ref{CTThBIBO} follows the same line of proof as of Theorem \ref{DTThBIBO}. Furthermore, a continuous system tensor with transfer function $\bar{\mathscr{H}}(\omega)$ is BIBO stable if all its components are BIBO stable. This implies that every pole of every entry of its transfer function has a real part less than 0.

\subsection{Applications of multi-linear tensor systems}

A tensor multi-linear (TML) system can be used to model and represent processes where two tensor signals are coupled  through a multi-linear functional. Among various other applications, use of tensors is ubiquitous in modern communication systems where the signals and systems involved have an inherent multi-domain structure. The physical layer model of modern communication systems invariably spans more than one domain of transmission and reception such as space, time, frequency to name a few. Consequently, the associated signal processing at the transmitter and receiver has to be cognizant of the multiple domains and their mutual effect on each other for an efficient resource utilization. Hence the signals and the systems involved are best represented using tensors.

\subsubsection{TN representation of TML systems}
Consider a multi-domain communication system where the input signal is an order $N$ tensor $\mathscr{X}(t) \in \mathbb{C}_t^{I_1 \times \dots \times I_N}$, which passes through an order $M+N$ multi-linear channel, $\mathscr{H}(t) \in \mathbb{C}_t^{J_1 \times \dots \times J_M \times I_1 \times \dots \times I_N}$ to generate an output of order $M$ using contracted convolution as $\mathscr{Y}(t) = \mathscr{H}(t) \bullet_N \mathscr{X}(t)$.  The channel is expressed as a TML system and its coupling with the input in both frequency and time domain can be represented in a TN diagram as shown in Figure \ref{ContConv_Fig3}. Note that each edge of the input is connected with the common edge of the channel via a dotted line in time domain and via a solid line in frequency domain. Instead of a regular block diagram representation of such systems, the TN diagram has the advantage that it graphically details all the modes of the input and the channel. Thus, just by looking at the free edges of the overall TN diagram, one can determine the modes of the output. The linearity of the system is reflected in the fact that any given edge of the channel is connected with a single edge of the input. Thus a TML system is easy to identify visually in a TN diagram by observing the presence of one on one edge connections between the system and the input.

Note that in a communication system, the input signal is often precoded before transmission by a transmit filter and the output signal is processes via a receive filter. The transmit and receiver filter can also be considered as system tensors. Thus the TML channel $\mathscr{H}(t)$ can be seen as a cascade of three system tensors. Let the transmit filter be represented by $\mathscr{H}_T(t) \in \mathbb{C}_t^{K_1 \times \dots \times K_Q \times I_1 \times \dots \times I_N}$ which transforms the order $N$ input into an order $Q$ transmit signal. The physical channel between the source and destination is modelled as an order $P+Q$ tensor $\mathscr{H}_C(t) \in \mathbb{C}_t^{L_1 \times \dots \times L_P \times K_1 \times \dots \times K_Q }$, and the receive filter is represented by  $\mathscr{H}_R(t) \in \mathbb{C}_t^{J_1 \times \dots \times J_M \times L_1 \times \dots \times L_P}$. In this case, the equivalent channel $\mathscr{H}(t)$ is obtained via a cascade of the three system tensors as :
\begin{equation}
\mathscr{H}(t) = \mathscr{H}_R(t) \bullet_P \mathscr{H}_C(t) \bullet_Q \mathscr{H}_T(t).
\end{equation}
A detailed derivation of such a channel representation can be found in \cite{AdithyaPaper}. A cascade of TML systems is conveniently represented in a TN diagram as shown in Figure \ref{ContConv_Fig4} which illustrates the coupling of the receive filter, physical channel and transmit filter system tensors in both time and frequency domains. Hence the nodes for $\mathscr{H}(t)$ and $\bar{\mathscr{H}}(\omega)$ in Figure \ref{ContConv_Fig3} can be broken down into component system tensors from Figure \ref{ContConv_Fig4}. A tensor system has multiple modes, and the contraction can be along various combinations of such modes. Hence the TN representation becomes extremely useful as opposed to regular block diagrams, since it allows to depict the state and coupling of each mode.  
 
\begin{figure}
\center
\includegraphics[scale=0.5]{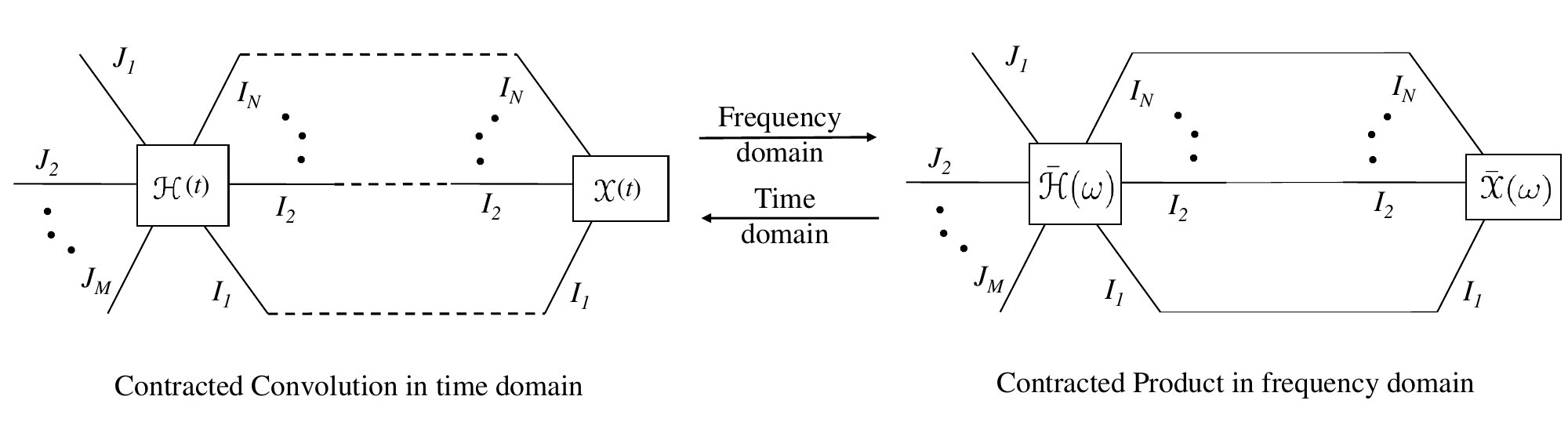}
\caption{TN representation of TML system in time and frequency domain. \label{ContConv_Fig3}}
\end{figure}

\begin{figure}
\center
\includegraphics[scale=0.55]{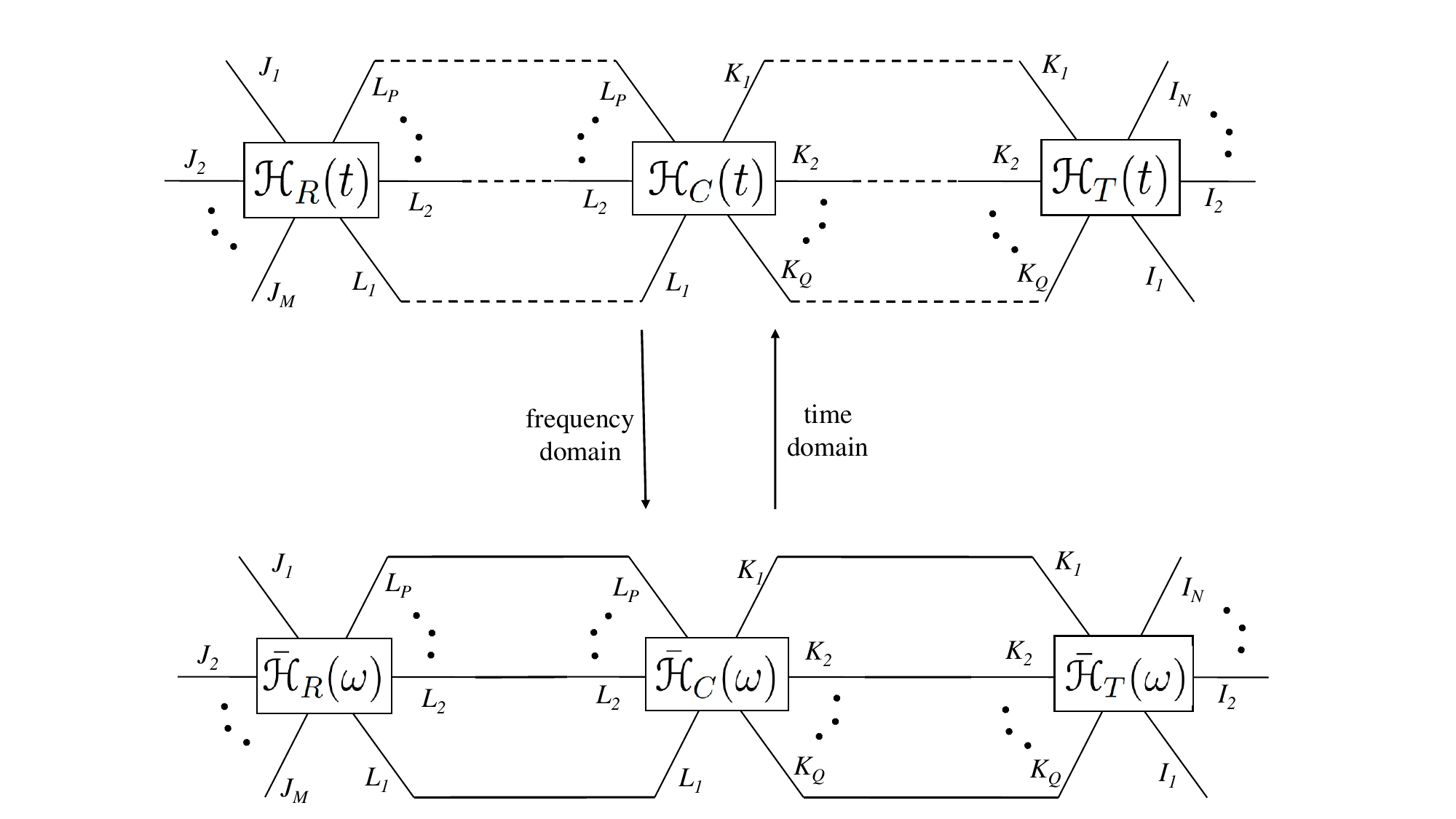}
\caption{TN representation of equivalent multi linear channel in time and frequency domain. \label{ContConv_Fig4}}
\end{figure}

\subsubsection{Example of Tensor Contraction for MIMO CDMA systems}

Code Division Multiple Access (CDMA) is a spread spectrum technique used in communication systems where multiple transmitting users can send information simultaneously over a single communication channel, thereby enabling multiple access. Each user uses the entire bandwidth along with a distinct pseudo-random spreading code to transmit information which is used to distinguish the users at the receiver. More details on CDMA can be found in \cite{proakis2007fundamentals}.

Consider an uplink scenario where $K$ users are transmitting information to a single base station (BS). Assume a simple additive white Gaussian Noise (AWGN) channel. Each user is assigned a distinct spreading sequence denoted by vector $\underline{\text{s}}^{(k)} \in \mathbb{C}^{L}$ of length $L$ which transmits a symbol $x^{(k)}$ for user $k$. The received signal at the BS can be written as \cite{MIMOCDMA2} :
\begin{equation}
\underline{\text{y}} = \sum_{k=1}^{K} x^{(k)} \underline{\text{s}}^{(k)} + \underline{\text{z}} 
\end{equation} 
where $\underline{\text{z}} \in \mathbb{C}^L$ represents the noise vector.
Now consider the extension of such a system model in the presence of flat fading channel and multiple antennas. Assume $K$ users each with $N_T$ transmit antennas are transmitting simultaneously to a BS with $N_R$ receive antennas. To allow multiple access, all the transmit antennas of all different users are assigned different spreading sequences of length $L$. Let $\underline{\text{s}}^{(k,i)} \in \mathbb{C}^L$ denotes the length $L$ spreading vector for the data transmitted by the $i$th antenna of the $k$th user, $x^{(k,i)}$. Transmit symbols are assumed to have zero mean and energy $E_s = \mathbb{E}[|x^{(k,i)}|^2]$, and the transmit vector from each user and each antenna is generated as $x^{(k,i)}\underline{\text{s}}^{(k,i)}$. The MIMO communication channel between user $k$ and the BS is defined as a matrix $\text{H}^{(k)} \in \mathbb{C}^{N_R \times N_T}$ where the random channel matrix has independent and identically distributed (i.i.d.) zero mean circular symmetric complex Gaussian entries with variance $1/N_R$. The distribution is denoted as $\mathcal{CN}(0,1/N_R)$. The received signal $\text{Y} \in \mathbb{C}^{N_R \times L}$ can be written as \cite{MIMOCDMA2, MIMOCDMA3}:
\begin{equation}\label{HXSplusZbefore}
\text{Y} = \sum_{k=1}^K \text{H}^{(k)} \text{X}^{(k)}\text{S}^{(k)} + \text{Z},
\end{equation}  
where $\text{X}^{(k)}$ is an $N_T \times N_T$ diagonal matrix defined as $diag(x^{(k,1)},x^{(k,2)},\dots,x^{(k,N_T)})$ and $\text{S}^{(k)}$ is an $N_T \times L$ matrix defined as $(\underline{\text{s}}^{(k,1)T}, \underline{\text{s}}^{(k,2)T}, \dots, \underline{\text{s}}^{(k,N_T)T})^T$. Also $\text{Z}$ represents $N_R \times L$ noise matrix with i.i.d. components distributed as $\mathcal{CN}(0,N_0)$. In \cite{MIMOCDMA2}, a per user matched filter receiver is considered for such a system by assuming the interference from other users as noise. It is shown in \cite{MIMOCDMA2} that such a receiver under performs as compared to a multi-user receiver which detects the transmit symbols for all the users together. Hence several multi-user receivers are presented in \cite{MIMOCDMA2}  by rewriting the system model from \eqref{HXSplusZbefore} as :
\begin{equation}\label{HXSplusZ}
\text{Y} = \text{H}\bar{\text{X}}\text{S} + \text{Z}
\end{equation}
where $\text{H}=(\text{H}^{(1)},\dots,\text{H}^{(K)}) \in \mathbb{C}^{N_R \times K \cdot N_T}$, $\bar{\text{X}}=diag(x^{(1,1)},\dots, x^{(1,N_T)},\dots, x^{(K,1)},\dots,x^{(K,N_T)}) \in \mathbb{C}^{K\cdot N_T \times K \cdot N_T}$, and $\text{S}=(\text{S}^{(1)T},\dots, \text{S}^{(K)T})^T \in \mathbb{C}^{K \cdot N_T \times L}$. Based on this, a multi-user receiver which aims to mitigate the effects of $\text{H}$ (spatial interference) and $\text{S}$ (multiple access interference) is considered. The received signal is linearly processed in two stages as $\text{Y} \rightarrow \text{AY} \rightarrow \text{AYB}$ and the transmit signal is decoded as \cite{MIMOCDMA2} : 
\begin{equation} \label{LMMSE12}
\hat{x}^{(k,i)} = \underset{x \in \mathcal{D}}{\text{arg max}} |(\text{AYB})_{j,j} - x|^2, \quad \text{where }j=(k-1)N_T+i.  
\end{equation}
where $\mathcal{D}$ denotes the set of symbols in the transmit constellation map. Essentially $\text{AYB}$ represents an estimated version of matrix $\bar{\text{X}}$, whose diagonal elements at index $j$ are used to decode the transmitted symbols and map them back to index $(k,i)$. The matrices $\text{A}$ and $\text{B}$ separately aims to mitigate the effects of spatial interference and multiple access interference on the received signal, and are defined as : 
\begin{equation}\label{AeqLMMSE}
  \text{A} \triangleq \begin{cases}
    (\text{H}^H\text{H})^{-1}\text{H}^H, & \text{ZF}.\\
    (\text{H}^H\text{H} + \dfrac{N_0}{E_s} \text{I}_{KN_T})^{-1}\text{H}^H, & \text{LMMSE}.
  \end{cases}
\end{equation}
and
\begin{equation}\label{BeqLMMSE}
  \text{B} \triangleq \begin{cases}
    \text{S}^H(\text{S}\text{S}^H)^{-1}, & \text{DECOR}.\\
    \text{S}^H(\text{S}\text{S}^H + \dfrac{N_0}{E_s} \text{I}_{KN_T})^{-1}, & \text{LMMSE}.
  \end{cases}
\end{equation}
where $\text{I}_{KN_T}$ is an identity matrix of size $K\cdot N_T \times K\cdot N_T$. The zero forcing (ZF) receiver ignores the impact of noise and only tries to counter the effect of channel, while the linear minimum mean square error (LMMSE) receiver tries to reduce the noise while simultaneously aiming to mitigate the effect of channel. The DECOR choice represents a multi user decorrelator receiver. At high SNR value, i.e. as $N_0/E_s \rightarrow 0$, the LMMSE options reduces to ZF and DECOR.  

Such a receiver based on jointly processing all the users gives better performance than a per user receiver \cite{MIMOCDMA2}. But it still has a drawback that it tries to combat the spatial interference and multiple access interference separately in two stages. Moreover, while the input in \eqref{HXSplusZ} is represented as a matrix $\bar{\text{X}}$, only its diagonal elements contain the transmit elements which is formed from the concatenation of various $x^{(k,i)}$. Thus such a system model does not fully exploit the multi-linearity of the system and tries to force a linear structure by manipulating the entities involved in order to fit the vector based well known LMMSE, ZF or DECOR solutions. In fact, the tensor framework can be perfectly used to represent such a system model while keeping the natural structure of the system intact, and develop a tensor multi-linear (TML) receiver.

Since the input symbol $x^{(k,i)}$ is indexed by two indices $k$ and $i$, it is natural to represent the input as a matrix $\text{X}$ of size $K \times N_T$ with elements $\text{X}_{k,i}=x^{(k,i)}$. Further, the input signal is transmitted as a vector $\underline{\text{x}}^{(k,i)}$ of length $L$ corresponding to each user index $k$ and antenna index $i$. Hence the transmitted signal through the channel can be represented as a third order tensor $\mathscr{X} $ of size $K \times N_T \times L$ where $\mathscr{X}_{k,i,l} = \underline{\text{x}}^{(k,i)}_l$. To generate $\mathscr{X}$ from $\text{X}$, we define the spreading sequences as an order 5 tensor  $\mathscr{S} \in \mathbb{C}^{K \times N_T \times L \times K \times N_T}$ with elements $\mathscr{S}_{k,i,l,k',i'} = \underline{\text{s}}_{l}^{(k,i)}$ when $k=k',i=i',$ for all $l$, and 0 elsewhere. Then we have $\mathscr{X} = \mathscr{S}*_2 \text{X}$. Note that here we assume that the elements of $\text{X}$ are mapped one to one with a spreading sequence hence the entries of $\mathscr{S}$ corresponding to $k \neq k', i \neq i'$ are zero. In certain applications, a linear combination of the input symbols might be transmitted, in which case the structure of $\mathscr{S}$ which represents a transmit filtering operation, will change accordingly. The channel matrices $\text{H}^{(k)}$ corresponding to each user can be represented as a slice in a third order tensor $\mathscr{H} \in \mathbb{C}^{N_R \times K \times N_T}$ where $\mathscr{H}_{:,k,:} = \text{H}^{(k)}$. Thus the system model can be given as :
\begin{equation}\label{TensorYHSXmodel}
\text{Y} = \underbrace{\mathscr{H} *_2 \mathscr{S}}_{\bar{\mathscr{H}}} *_2 \text{X} + \text{Z}
\end{equation} 
where $\bar{\mathscr{H}} \in \mathbb{C}^{N_R \times L \times K \times N_T}$ represents the equivalent fourth order TML channel between the order two input $\text{X}$ and order two output $\text{Y}$.

Note the advantage in modelling the system model through \eqref{TensorYHSXmodel} is that all the associated entities retain their natural structure, and  a joint TML receiver can be designed to combat the effect of all the interferences of all the users simultaneously. A multi-linear minimum mean square error receiver which acts across all the domains simultaneously can be represented through a tensor $\mathscr{R} \in \mathbb{C}^{ K \times N_T \times N_R \times L}$ which produces an estimate of the input $\text{X}$ by acting upon the received tensor $\text{Y}$ as $\tilde{\text{X}} = \mathscr{R} *_2 \text{Y}$. Thus each element of the estimated input at the receiver is a linear combination of all the elements of $\text{Y}$, where the coefficients of the linear combinations are encapsulated in $\mathscr{R}$. An optimal choice of $\mathscr{R}$ which minimizes the mean square error between $\text{X}$ and $\tilde{\text{X}}$, defined as $\mathbb{E}[||\text{X} - \tilde{\text{X}}||^2_2]$, is given as \cite{MMSEJournal} :
\begin{equation}
\mathscr{R} = \bar{\mathscr{H}}^H *_2 (\bar{\mathscr{H}} *_2 \bar{\mathscr{H}}^H + \dfrac{N_0}{E_s} \mathscr{I})^{-1}
\end{equation}
where $\mathscr{I}$ is an identity tensor of size $K \times N_T \times K \times N_T$. The estimated symbol $\tilde{\text{X}}$ can be used to detect the transmit symbols as:
\begin{equation}
\hat{x}^{(k,i)} = \underset{x \in \mathcal{D}}{\text{arg max}} |\tilde{\text{X}}_{k,i} - x|^2
\end{equation} 
We will refer to such a receiver as a TML MMSE receiver. Since a TML MMSE receiver jointly acts upon symbols across all domains, it aids in detecting the transmit symbol by  exploiting the multi-domain interference terms. Through simulation results, we will compare the performance of TML MMSE receiver with \eqref{LMMSE12}. In \eqref{LMMSE12}, we assume $\text{A}$ to be the LMMSE matrix from \eqref{AeqLMMSE}, and for $\text{B}$, we simulate both the DECOR and LMMSE matrices from \eqref{BeqLMMSE}. Hence we simulate LMMSE-DECOR and LMMSE-LMMSE cases from \cite{MIMOCDMA2}. We will refer to the former as LMMSE1 and the latter as LMMSE2 for our discussion going forward. The simulation parameters used are the same as in \cite{MIMOCDMA2} where entries of $\text{H}^{(k)}$ are i.i.d. which are $\mathcal{CN}(0,1/N_R)$. It is assumed that the channel realizations are known at the receiver. We assume uncoded transmission with QAM modulation where symbols are normalized to have unit energy, i.e. $E_s=1$. The spreading sequences are generated with i.i.d. symbols equiprobable over the set $\{\pm L^{-1/2}, \pm j L^{-1/2}\}$. We user Bit Error Rate (BER) and normalized mean square error (NMSE) as performance measures. All the results are plotted against  $E_b/N_0$ in dB where $E_b$ is the energy per bit defined as $E_s/2$. Thus $E_b/N_0$ represents received SNR per bit. We perform Monte Carlo simulations where the results are averaged over 100 different channel realizations, and at least 100 bit errors were collected for each SNR to calculate BER. The mean square error is normalized with respect to the number of elements in $\text{X}$ and is thus defined as $\text{NMSE} =||\text{X}-\tilde{\text{X}}||_F^2/(N_T \cdot K) $. To compare the performance difference between TML MMSE, LMMSE1, and LMMSE2 against SNR, we plot the BER and NMSE for the three receivers in Figures \ref{Fig_CDMA_3} and \ref{Fig_CDMA_4} respectively. We take $L=32, K=4,N_T=4, N_R = 32$. It can be clearly seen in both the figures, that the BER and the NMSE decreases as SNR increases. In particular the TML MMSE leads to a much lower BER and NMSE compared to the other two receivers as it exploits the multi-linearity of the equivalent channel to jointly combat interference across all the domains. Within LMMSE1 and LMMSE2, it can be observed that LMMSE2 performs better as the choice of DECOR for $\text{B}$ from \eqref{BeqLMMSE} is sub optimal as compared to LMMSE.

\begin{figure}
\center
\includegraphics[scale=0.5]{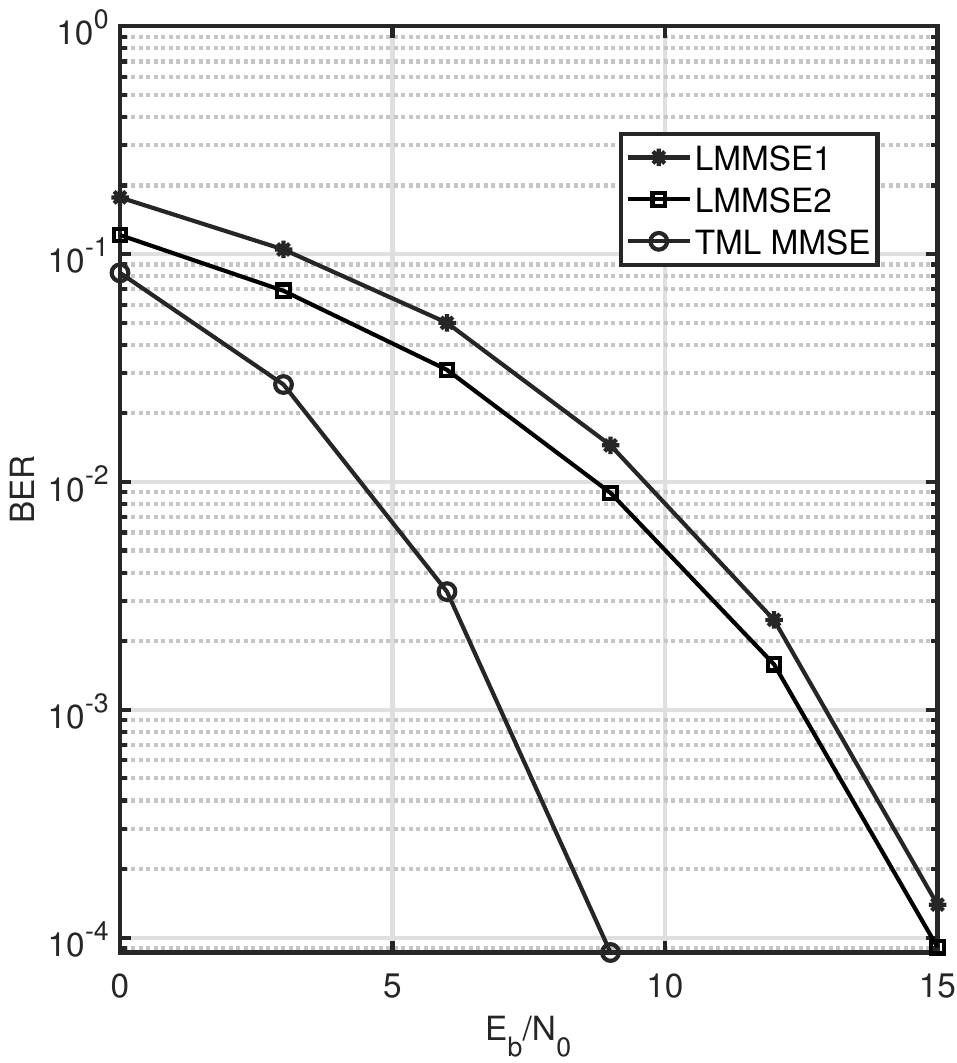}
\caption{BER performance for different receivers against SNR for $L=32, K=4, N_R =32, N_T =4$. \label{Fig_CDMA_3}}
\end{figure}

\begin{figure}
\center
\includegraphics[scale=0.5]{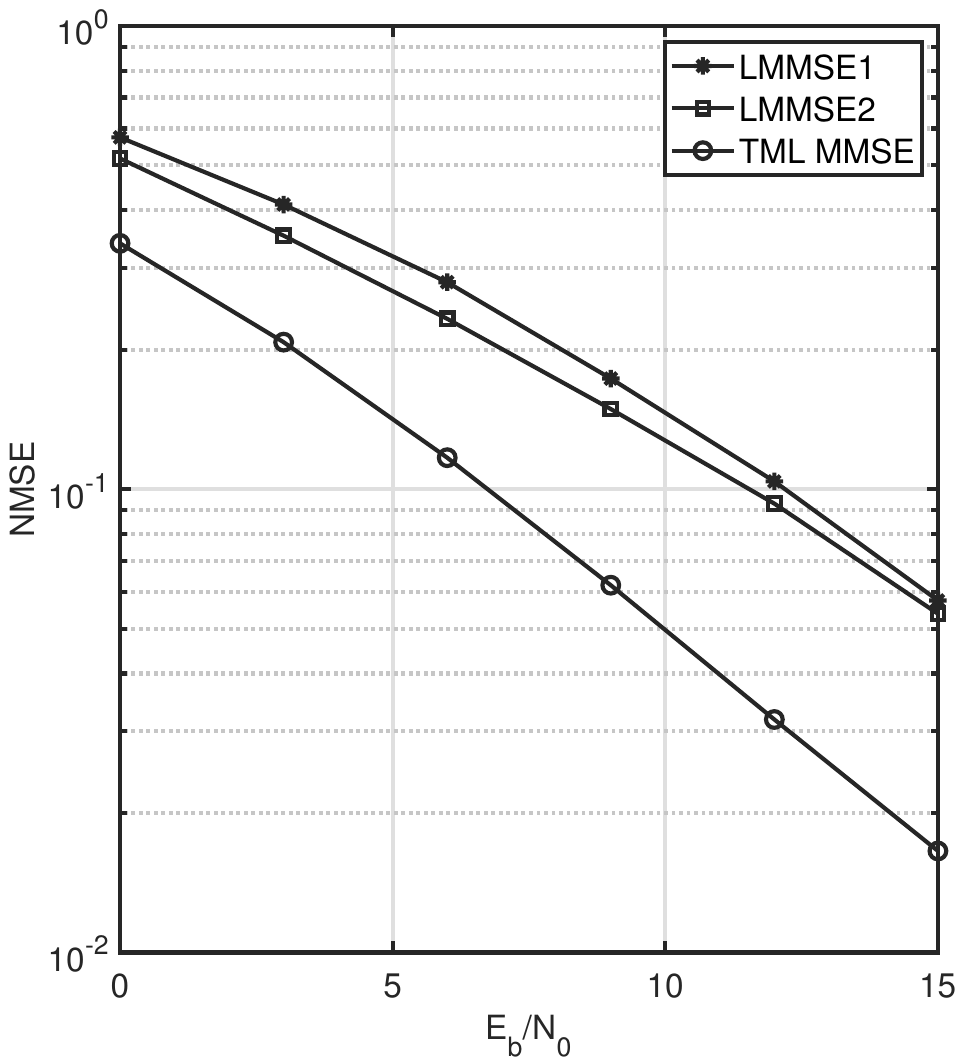}
\caption{Normalized MSE for different receivers against SNR for $L=32, K=4, N_T =32, N_T =4$. \label{Fig_CDMA_4}}
\end{figure}

Further, the advantage of TML MMSE can be clearly seen when BER and NMSE are observed for a fixed SNR per bit and variable number of users. Consider $L=64, N_R = 64, N_T=2$ and number of users $K$ is variable. Figures \ref{Fig_CDMA_5} and \ref{Fig_CDMA_6} present BER and NMSE performance against $K$ for two fixed values of SNR per bit. The solid lines correspond to a 5dB SNR per bit and dashed lines correspond to an 8dB SNR per bit. It can be clearly seen that for a fixed SNR per bit, the BER and NMSE curves for TML MMSE case remain almost flat as the number of users increase. On the other hand, the performance of LMMSE1 and LMMSE2 significantly degrades with increase in the number of users. As the number of users increase, the interference across domains also increases which is only efficiently utilized in the TML MMSE receiver. Hence it shows a robustness in its performance with increasing number of users.

\begin{figure}
\center
\includegraphics[scale=0.5]{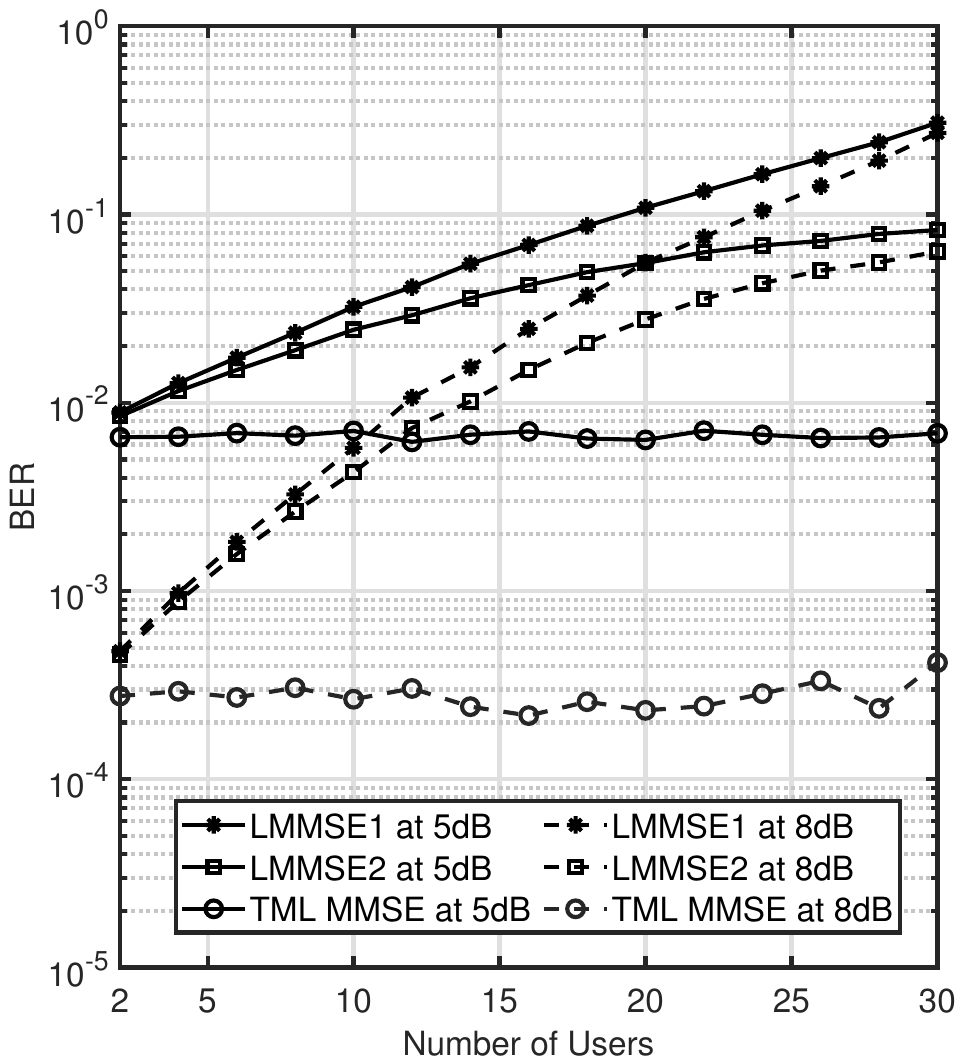}
\caption{BER performance for different receivers against number of users. \label{Fig_CDMA_5}}
\end{figure}

\begin{figure}
\center
\includegraphics[scale=0.5]{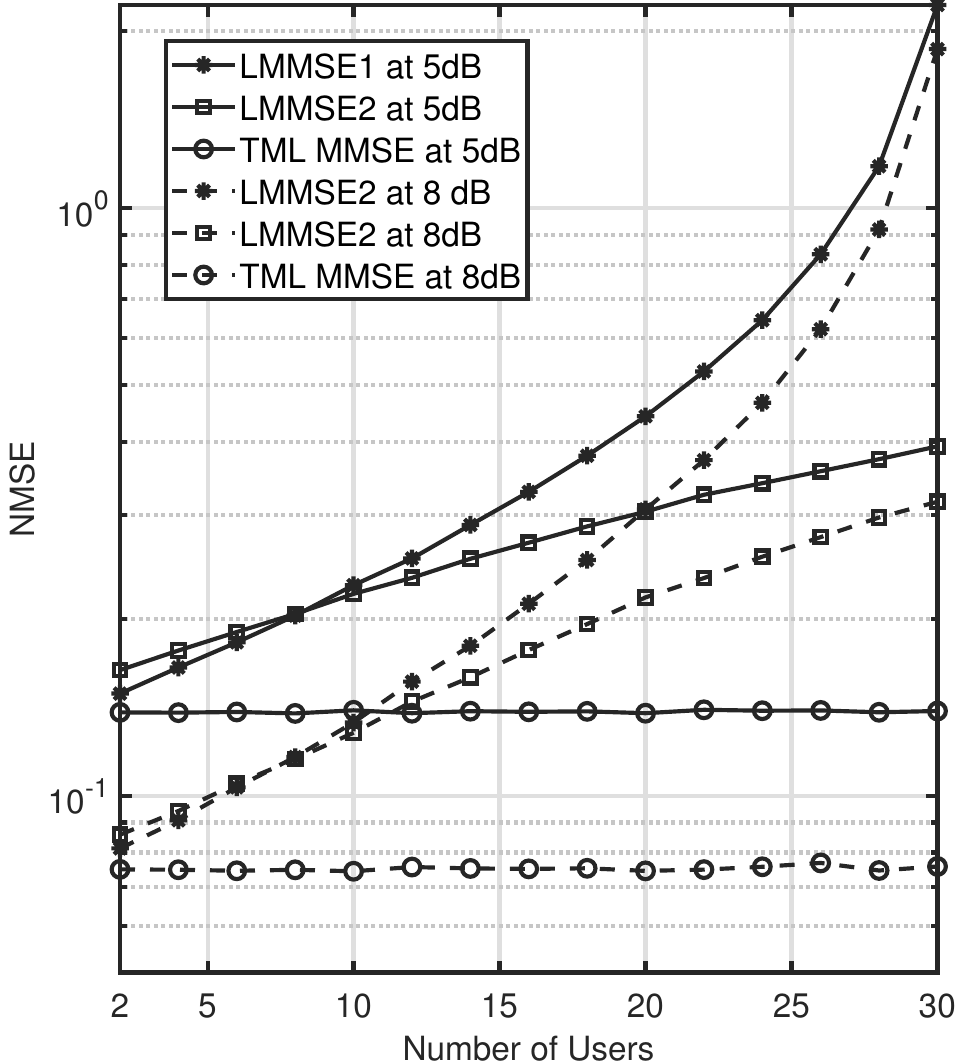}
\caption{Normalized MSE for different receivers against number of users. \label{Fig_CDMA_6}}
\end{figure}

Note that \eqref{TensorYHSXmodel} can be re-written as a system of linear equations by using vectorization of the input, output and noise, and considering the channel as a concatenated matrix $f_{N_R,L|K,N_T}(\bar{\mathscr{H}})$. Subsequently a joint receiver can also be designed using the transformed matrix channel, as presented in \cite{MIMOCDMA2}, which is conceptually equivalent to the TML MMSE approach presented in this paper. However, the concatenation of various domains obscures the different domain representations (indices) in the system. Such an approach makes it difficult to incorporate domain specific constraints at the transceiver. For instance, a common transmit constraint is the power budget which in most practical cases would be different for different users. Thus designing a transmission scheme with per user power constraints becomes important, and can be achieved using the tensor framework as it maintains the identifiability of domains. Such a consideration has been presented in \cite{MDPIpaper, 9737095}.

\subsubsection{Other Examples}
In \cite{AdithyaPaper}, several multi-domain communication systems such as MIMO OFDM, GFDM, FBMC are represented using the tensor contracted convolution and contracted product. Also \cite{AdithyaPaper} develops tensor based receiver equalization methods which are used to combat interference in communication systems using the notion of tensor inversion. A tensor based receiver and precoder are presented in \cite{ICCpaper} for a MIMO GFDM system where the channel is represented as a sixth order tensor. The tensor SVD and EVD presented in this paper is used to design transmit coding operations and perform an information theoretic analysis of the tensor channel \cite{9737095} leading to the notion of multi-domain water filling power allocation method. Also, the discrete multi-linear contracted convolution is used to design Tensor Partial Response Signaling (TPRS) systems to shape the spectrum and cross spectrum of transmit signals \cite{MDPIpaper}. The tensor inversion method can also be used to develop estimation techniques for various signal processing applications such as in big data or wireless communications as shown in \cite{MMSEJournal}.  Another example of use of tensor Einstein product is for Image restoration and reconstruction application where objective is to retrieve an image affected by noise, a focal-field distribution, and aperture function \cite{LuEVD}. The image data is stored as a three dimensional tensor and an order 6 tensor acts as a channel obtained from the point spread function \cite{LuEVD}, such that output is given using the Einstein product between input and channel. Another area where the Einstein product properties have been used is the multi-linear dynamical system theory \cite{MLTI1,MLTI2}. In \cite{MLTI1}, a generalized multi-linear time invariant system theory is developed using the Einstein product which can be applied for dynamical systems such as human genome, social networks, cognitive sciences. The notion of tensor Eigenvalue decomposition presented in this paper is used in \cite{MLTI1} to derive conditions for stability, reachability, and observability for dynamical systems. The Einstein product has this distinct advantage that it lets us develop tensor algebra notions similar to linear algebra at the same time without disturbing or reshaping the structure of tensors. In addition, the more general tensor contracted product and contracted convolution can be used to model multi-domain systems with any mode ordering as well.  

\section{Summary and Concluding Remarks}\label{Sec6}
This paper presented a review of tensor algebra concepts developed using the contracted product, more specifically the Einstein Product, extending the common notions in Linear Algebra to a multi-linear setting. In particular, the notion of tensor inverse, singular and Eigenvalues decompositions, LU decomposition were discussed. We also studied the tensor networks representations of tensor contractions and convolutions. The notions of time invariant discrete and continuous multi-linear systems which can be defined using the contracted convolutions were also presented. We presented an application in a multi-domain communication system where the channel is modelled as a multi-linear system. The multi-linearity of the channel allowed us to develop a receiver which jointly combats interference across all the domains, thereby giving much better BER and MSE performance as compared to linear receivers which act on a specific domain at a time. The tensor algebra notions discussed in this paper has extensive applications in various fields such as Communications, signals and systems, controls, image processing, to name  a few. In the presence of several other tensor tutorial papers in literature, this paper by no means intends to summarize all the multi-linear algebra concepts, but provides a tutorial style introduction to the main concepts from a signals and systems perspective.   
\appendices

\end{spacing}

\bibliography{AllRef}

\begin{thebibliography}{10}
\providecommand{\url}[1]{#1}
\csname url@samestyle\endcsname
\providecommand{\newblock}{\relax}
\providecommand{\bibinfo}[2]{#2}
\providecommand{\BIBentrySTDinterwordspacing}{\spaceskip=0pt\relax}
\providecommand{\BIBentryALTinterwordstretchfactor}{4}
\providecommand{\BIBentryALTinterwordspacing}{\spaceskip=\fontdimen2\font plus
\BIBentryALTinterwordstretchfactor\fontdimen3\font minus
  \fontdimen4\font\relax}
\providecommand{\BIBforeignlanguage}[2]{{%
\expandafter\ifx\csname l@#1\endcsname\relax
\typeout{** WARNING: IEEEtran.bst: No hyphenation pattern has been}%
\typeout{** loaded for the language `#1'. Using the pattern for}%
\typeout{** the default language instead.}%
\else
\language=\csname l@#1\endcsname
\fi
#2}}
\providecommand{\BIBdecl}{\relax}
\BIBdecl

\bibitem{KoldaTensor}
\BIBentryALTinterwordspacing
T.~G. Kolda and B.~W. Bader, ``{Tensor Decompositions and Applications},''
  \emph{SIAM Review}, vol.~51, no.~3, pp. 455--500, 2009. [Online]. Available:
  \url{http://dx.doi.org/10.1137/07070111X}
\BIBentrySTDinterwordspacing

\bibitem{PierreComon}
P.~Comon, ``{Tensors : A brief introduction},'' \emph{IEEE Signal Processing
  Magazine}, vol.~31, no.~3, pp. 44--53, May 2014.

\bibitem{tucker64extension}
L.~R. Tucker, ``{T}he extension of factor analysis to three-dimensional
  matrices,'' in \emph{{Contributions to mathematical psychology.}},
  H.~Gulliksen and N.~Frederiksen, Eds.\hskip 1em plus 0.5em minus 0.4em\relax
  New York: Holt, Rinehart and Winston, 1964, pp. 110--127.

\bibitem{chemo1981}
C.~J. Appellof and E.~R. Davidson, ``{Strategies for analyzing data from video
  fluorometric monitoring of liquid chromatographic effluents},''
  \emph{Analytical Chemistry}, vol.~53, no.~13, pp. 2053--2056, 1981.

\bibitem{bro2006review}
R.~Bro, ``{Review on multiway analysis in Chemistry—2000--2005},''
  \emph{Critical reviews in analytical chemistry}, vol.~36, no. 3-4, pp.
  279--293, 2006.

\bibitem{DataMining}
X.~Li, M.~K. Ng, and Y.~Ye, ``{MultiComm: Finding Community Structure in
  Multi-Dimensional Networks},'' \emph{IEEE Transactions on Knowledge and Data
  Engineering}, vol.~26, no.~4, pp. 929--941, April 2014.

\bibitem{TensorsDataMining}
E.~E. Papalexakis, C.~Faloutsos, and N.~D. Sidiropoulos, ``{Tensors for data
  mining and data fusion: Models, applications, and scalable algorithms},''
  \emph{ACM Transactions on Intelligent Systems and Technology (TIST)}, vol.~8,
  no.~2, pp. 1--44, 2017.

\bibitem{ComputerVision}
\BIBentryALTinterwordspacing
A.~Shashua and T.~Hazan, ``{Non-negative Tensor Factorization with Applications
  to {S}tatistics and {C}omputer {V}ision},'' in \emph{Proceedings of the 22nd
  International Conference on Machine Learning}, ser. ICML '05.\hskip 1em plus
  0.5em minus 0.4em\relax New York, NY, USA: ACM, 2005, pp. 792--799. [Online].
  Available: \url{http://doi.acm.org/10.1145/1102351.1102451}
\BIBentrySTDinterwordspacing

\bibitem{ComputerVision2}
J.~J. Guerrero, A.~C. Murillo, and C.~SagÜÉs, ``{Localization and Matching
  Using the Planar Trifocal Tensor With Bearing-Only Data},'' \emph{IEEE
  Transactions on Robotics}, vol.~24, no.~2, pp. 494--501, 2008.

\bibitem{Neuroscience}
C.-F.~V. Latchoumane, F.~Vialatte, J.~Solé-Casals, M.~Maurice, S.~Wimalaratna,
  N.~Hudson, J.~Jeong, and A.~Cichocki, ``{Multiway array decomposition
  analysis of {EEG}s in {A}lzheimer's disease},'' vol. 207, pp. 41--50, 03
  2012.

\bibitem{NikosTensor}
N.~D. Sidiropoulos, L.~D. Lathauwer, X.~Fu, K.~Huang, E.~E. Papalexakis, and
  C.~Faloutsos, ``{Tensor Decomposition for Signal Processing and Machine
  Learning},'' \emph{IEEE Transactions on Signal Processing}, vol.~65, no.~13,
  pp. 3551--3582, July 2017.

\bibitem{Cichocki}
A.~{Cichocki}, D.~{Mandic}, L.~{De Lathauwer}, G.~{Zhou}, Q.~{Zhao},
  C.~{Caiafa}, and H.~A. {PHAN}, ``{Tensor Decompositions for Signal Processing
  Applications: From two-way to multiway component analysis},'' \emph{IEEE
  Signal Processing Magazine}, vol.~32, no.~2, pp. 145--163, 2015.

\bibitem{NikosGianPFA}
N.~D. Sidiropoulos, R.~Bro, and G.~B. Giannakis, ``{Parallel factor analysis in
  sensor array processing},'' \emph{IEEE Transactions on Signal Processing},
  vol.~48, no.~8, pp. 2377--2388, 2000.

\bibitem{MMSEJournal}
D.~Pandey and H.~Leib, ``{A Tensor Framework for Multi-Linear Complex MMSE
  Estimation},'' \emph{IEEE Open Journal of Signal Processing}, vol.~2, pp.
  336--358, 2021.

\bibitem{AdithyaPaper}
A.~{Venugopal} and H.~{Leib}, ``A tensor based framework for multi-domain
  communication systems,'' \emph{IEEE Open Journal of the Communications
  Society}, vol.~1, pp. 606--633, 2020.

\bibitem{MDPIpaper}
\BIBentryALTinterwordspacing
D.~Pandey, A.~Venugopal, and H.~Leib, ``{Multi-Domain Communication Systems and
  Networks: A Tensor-Based Approach},'' \emph{MDPI Network}, vol.~1, no.~2, pp.
  50--74, 2021. [Online]. Available: \url{https://www.mdpi.com/2673-8732/1/2/5}
\BIBentrySTDinterwordspacing

\bibitem{MLTI1}
\BIBentryALTinterwordspacing
C.~Chen, A.~Surana, A.~Bloch, and I.~Rajapakse, ``{Multilinear Time Invariant
  System Theory},'' in \emph{2019 Proceedings of the Conference on Control and
  its Applications}, 2019, pp. 118--125. [Online]. Available:
  \url{https://epubs.siam.org/doi/abs/10.1137/1.9781611975758.18}
\BIBentrySTDinterwordspacing

\bibitem{MLTI2}
C.~Chen, A.~Surana, A.~M. Bloch, and I.~Rajapakse, ``{Multilinear Control
  Systems Theory},'' \emph{SIAM Journal on Control and Optimization}, vol.~59,
  no.~1, pp. 749--776, 2021.

\bibitem{TensorTut2021}
H.~Chen, F.~Ahmad, S.~Vorobyov, and F.~Porikli, ``{Tensor Decompositions in
  Wireless Communications and MIMO Radar},'' \emph{IEEE Journal of Selected
  Topics in Signal Processing}, vol.~15, no.~3, pp. 438--453, 2021.

\bibitem{TensorBook2020}
I.~Kisil, G.~G. Calvi, B.~S. Dees, and D.~P. Mandic, ``Tensor decompositions
  and practical applications: A hands-on tutorial,'' in \emph{Recent Trends in
  Learning From Data}.\hskip 1em plus 0.5em minus 0.4em\relax Springer, 2020,
  pp. 69--97.

\bibitem{limsingular}
L.-H. Lim, ``{Singular values and Eigenvalues of tensors: A variational
  approach},'' in \emph{1st IEEE International Workshop on Computational
  Advances in Multi-Sensor Adaptive Processing, 2005.}\hskip 1em plus 0.5em
  minus 0.4em\relax IEEE, 2005, pp. 129--132.

\bibitem{BelzenSVD}
S.~Weiland and F.~Van~Belzen, ``{Singular Value Decompositions and low rank
  approximations of tensors},'' \emph{IEEE Transactions on Signal Processing},
  vol.~58, no.~3, p. 1171, 2010.

\bibitem{LathauwerSVD}
\BIBentryALTinterwordspacing
L.~D. Lathauwer, B.~D. Moor, and J.~Vandewalle, ``{A Multilinear Singular Value
  Decomposition},'' \emph{SIAM J. Matrix Anal. Appl.}, vol.~21, no.~4, pp.
  1253--1278, Mar. 2000. [Online]. Available:
  \url{https://doi.org/10.1137/S0895479896305696}
\BIBentrySTDinterwordspacing

\bibitem{QiLEigen}
L.~Qi, ``{Eigenvalues of a real supersymmetric tensor},'' \emph{Journal of
  Symbolic Computation}, vol.~40, pp. 1302--1324, 2005.

\bibitem{TamonTensorInversion}
\BIBentryALTinterwordspacing
M.~Brazell, N.~Li, C.~Navasca, and C.~Tamon, ``{Solving Multilinear Systems via
  Tensor Inversion},'' \emph{SIAM Journal on Matrix Analysis and Applications},
  vol.~34, no.~2, pp. 542--570, 2013. [Online]. Available:
  \url{https://doi.org/10.1137/100804577}
\BIBentrySTDinterwordspacing

\bibitem{LuEVD}
\BIBentryALTinterwordspacing
L.-B. Cui, C.~Chen, W.~Li, and M.~K. Ng, ``{An Eigenvalue problem for even
  order tensors with its applications},'' \emph{Linear and Multilinear
  Algebra}, vol.~64, no.~4, pp. 602--621, 2016. [Online]. Available:
  \url{https://doi.org/10.1080/03081087.2015.1071311}
\BIBentrySTDinterwordspacing

\bibitem{SunMoore}
\BIBentryALTinterwordspacing
L.~Sun, B.~Zheng, C.~Bu, and Y.~Wei, ``{Moore–Penrose inverse of tensors via
  {E}instein product},'' \emph{Linear and Multilinear Algebra}, vol.~64, no.~4,
  pp. 686--698, 2016. [Online]. Available:
  \url{https://doi.org/10.1080/03081087.2015.1083933}
\BIBentrySTDinterwordspacing

\bibitem{TensorDet}
\BIBentryALTinterwordspacing
M.~lin Liang, B.~Zheng, and R.~juan Zhao, ``{Tensor inversion and its
  application to the tensor equations with Einstein product},'' \emph{Linear
  and Multilinear Algebra}, vol.~67, no.~4, pp. 843--870, 2019. [Online].
  Available: \url{https://doi.org/10.1080/03081087.2018.1500993}
\BIBentrySTDinterwordspacing

\bibitem{huang2021numerical2}
B.~Huang and W.~Li, ``{Numerical subspace algorithms for solving the tensor
  equations involving Einstein product},'' \emph{Numerical Linear Algebra with
  Applications}, vol.~28, no.~2, p. e2351, 2021.

\bibitem{huang2021numerical}
B.~Huang, ``{Numerical study on Moore-Penrose inverse of tensors via Einstein
  product},'' \emph{Numerical Algorithms}, vol.~87, pp. 1767--1797, 2021.

\bibitem{huang2018iterative}
B.~Huang and C.~Ma, ``{An iterative algorithm to solve the generalized
  Sylvester tensor equations},'' \emph{Linear and Multilinear Algebra}, pp.
  1--26, 2018.

\bibitem{wang2018iterative}
\BIBentryALTinterwordspacing
Q.-W. Wang and X.~Xu, ``{Iterative algorithms for solving some tensor
  equations},'' \emph{Linear and Multilinear Algebra}, vol.~67, no.~7, pp.
  1325--1349, 2019. [Online]. Available:
  \url{https://doi.org/10.1080/03081087.2018.1452889}
\BIBentrySTDinterwordspacing

\bibitem{BaderTensor}
B.~W. Bader and T.~G. Kolda, ``{Algorithm 862: {MATLAB} Tensor Classes for Fast
  Algorithm Prototyping},'' \emph{ACM Transactions on Mathematical Software},
  vol.~32, no.~4, pp. 635--653, December 2006.

\bibitem{DeLvectorize}
L.~De~Lathauwer, J.~Castaing, and J.-F. Cardoso, ``{Fourth-order cumulant-based
  blind identification of underdetermined mixtures},'' \emph{IEEE Transactions
  on Signal Processing}, vol.~55, no.~6, pp. 2965--2973, 2007.

\bibitem{pan2014tensor}
R.~Pan, ``Tensor transpose and its properties,'' \emph{arXiv preprint
  arXiv:1411.1503}, 2014.

\bibitem{FICCpaper}
D.~Pandey and H.~Leib, ``{Tensor Multi-linear MMSE Estimation Using the
  Einstein Product},'' in \emph{Advances in Information and Communication
  {(FICC 2021)}}.\hskip 1em plus 0.5em minus 0.4em\relax Springer International
  Publishing, 2021, pp. 47--64.

\bibitem{MPIExnMishra}
\BIBentryALTinterwordspacing
K.~Panigrahy and D.~Mishra, ``{Extension of Moore–Penrose inverse of tensor
  via Einstein product},'' \emph{Linear and Multilinear Algebra}, vol.~0,
  no.~0, pp. 1--24, 2020. [Online]. Available:
  \url{https://doi.org/10.1080/03081087.2020.1748848}
\BIBentrySTDinterwordspacing

\bibitem{cichocki2014era}
A.~Cichocki, ``{Era of big data processing: A new approach via tensor networks
  and tensor decompositions},'' \emph{arXiv preprint arXiv:1403.2048}, 2014.

\bibitem{zhang2018tensor}
A.~Zhang and D.~Xia, ``{Tensor SVD: Statistical and computational limits},''
  \emph{IEEE Transactions on Information Theory}, vol.~64, no.~11, pp.
  7311--7338, 2018.

\bibitem{QiLBook}
L.~Qi, H.~Chen, and Y.~Chen, \emph{{Tensor Eigenvalues and Their
  Applications}}.\hskip 1em plus 0.5em minus 0.4em\relax Springer, 2018,
  vol.~39.

\bibitem{Bernstein}
Z.~Luo, L.~Qi, and P.~L. Toint, ``{Bernstein Concentration Inequalities for
  Tensors via Einstein Products},'' \emph{arXiv preprint arXiv:1902.03056},
  2019.

\bibitem{QiLDet}
S.~Hu, Z.-H. Huang, C.~Ling, and L.~Qi, ``{On Determinants and Eigenvalue
  theory of tensors},'' \emph{Journal of Symbolic Computation}, vol.~50, pp.
  508--531, 2013.

\bibitem{MatrixCookbook}
\BIBentryALTinterwordspacing
K.~B. Petersen and M.~S. Pedersen, ``The matrix cookbook,'' nov 2012, version
  20121115. [Online]. Available:
  \url{http://www2.compute.dtu.dk/pubdb/pubs/3274-full.html}
\BIBentrySTDinterwordspacing

\bibitem{mitra2006digital}
S.~K. Mitra and Y.~Kuo, \emph{{Digital signal processing: a computer-based
  approach}}.\hskip 1em plus 0.5em minus 0.4em\relax McGraw-Hill New York,
  2006, vol.~2.

\bibitem{PolesandZerosMIMO}
\BIBentryALTinterwordspacing
M.~Dahleh, M.~A. Dahleh, and G.~Verghese, ``{Chapter 27: Poles and Zeros of
  MIMO Systems},'' in \emph{Lecture Notes on Dynamic Systems and Control},
  Cambridge~MA, 2011, {MIT OpenCourseWare}. [Online]. Available:
  \url{https://ocw.mit.edu/courses/electrical-engineering-and-computer-science/6-241j-dynamic-systems-and-control-spring-2011/readings/MIT6_241JS11_chap27.pdf}
\BIBentrySTDinterwordspacing

\bibitem{proakis2007fundamentals}
J.~G. Proakis and M.~Salehi, \emph{{Fundamentals of communication
  systems}}.\hskip 1em plus 0.5em minus 0.4em\relax Pearson Education India,
  2007.

\bibitem{MIMOCDMA2}
A.~Nordio and G.~Taricco, ``Linear receivers for the multiple-input
  multiple-output multiple-access channel,'' \emph{IEEE Transactions on
  Communications}, vol.~54, no.~8, pp. 1446--1456, 2006.

\bibitem{MIMOCDMA3}
K.~Takeuchi, T.~Tanaka, and T.~Yano, ``{Asymptotic Analysis of General
  Multiuser Detectors in MIMO DS-CDMA Channels},'' \emph{IEEE Journal on
  Selected Areas in Communications}, vol.~26, no.~3, pp. 486--496, 2008.

\bibitem{9737095}
D.~Pandey and H.~Leib, ``{The Tensor Multi-Linear Channel and Its Shannon
  Capacity},'' \emph{IEEE Access}, vol.~10, pp. 34\,907--34\,944, 2022.

\bibitem{ICCpaper}
------, ``{A Tensor based Precoder and Receiver for MIMO GFDM systems},'' in
  \emph{ICC 2021 - IEEE International Conference on Communications (held
  virtually between 14-23 June 2021)}, 2021, pp. 1--6.

\end{thebibliography}
\bibliographystyle{IEEEtran}

\end{document}